\def\@advisors{}
\newcommand{\advisors}[1]{\def\@advisors{#1}}
\def\@department{}
\newcommand{\department}[1]{\def\@department{#1}}
\def\@thesistype{}
\newcommand{\thesistype}[1]{\def\@thesistype{#1}}
\renewcommand{\maketitlehookb}{\vspace{1in}%
  \par\begin{center}\Large\@thesistype\end{center}}
\renewcommand{\maketitlehookd}{%
  \vfill\par
  \begin{flushright}
    \@advisors\par
    \@department, ETH Z\"urich
  \end{flushright}
}
\newcommand{\naturals}{\mathbold{N}}
\newcommand{\integers}{\mathbold{Z}}
\newcommand{\nin}{\not\in}
\newtheoremstyle{thm}{}{}{}{}{\bfseries}{.}{0.5em}{\thmname{#1}\thmnumber{ #2}\thmnote{ #3}}
\newtheoremstyle{thmn}{}{}{\bfseries}{}{}{.}{0.5em}{\textit{\thmname{#1}\thmnumber{ #2}} (\thmnote{#3})}
\theoremstyle{thm}
\DeclarePairedDelimiter{\ceil}{\lceil}{\rceil}
\DeclarePairedDelimiter{\floor}{\lfloor}{\rfloor}
\DeclarePairedDelimiter{\abs}{\lvert}{\rvert}
\begin{document}

\newcommand{\newn}[1]{\Gamma_{\mathrm{new}}(#1)}

\newtheorem{thm}{Theorem}
\newtheorem{corollary}{Corollary}
\newtheorem{lemma}{Lemma}
\newtheorem{proposition}{Proposition}
\theoremstyle{definition}
\newtheorem{define}{Definition}

\title{Rainbow Cycles and Paths}
\author{Frank Mousset}
\date{August 22, 2011}
\thesistype{Bachelor Thesis}
\advisors{Advisors: Prof.\ Dr.\ Emo Welzl, Heidi Gebauer}
\department{Department of Computer Science}

\calccentering{\unitlength}

\frontmatter
\begin{adjustwidth*}{\unitlength}{-\unitlength}
%\begin{center}
%Rainbow Cycles and Paths
%\end{center}
%\thispagestyle{empty}
%\cleartorecto
\maketitle
\thispagestyle{empty}
\cleartorecto

\begin{abstract}
  In a properly edge colored graph, a subgraph
  using every color at most once is called \emph{rainbow}.
  In this thesis, we study rainbow cycles and paths in proper edge colorings of complete graphs,
  and we
  prove that in every proper edge coloring of $K_n$, there is a
  rainbow path on $(3/4-o(1))n$ vertices, improving on the previously
  best bound of $(2n+1)/3$ from~\cite{GyarfasMhalla2010}.

  Similarly, a $k$-rainbow path in a proper edge coloring of $K_n$
  is a path using no color more than $k$ times.
  We prove that in every proper edge coloring of $K_n$, there is a $k$-rainbow path on
  $(1-2/(k+1)!)n$ vertices.
\end{abstract}

\thispagestyle{chapter}
\end{adjustwidth*}
\cleartorecto
\tableofcontents

\mainmatter

\chapter{Introduction}

\section{Rainbow cycles and paths}

Consider an edge colored graph $G$.
A subgraph of $G$ is called \emph{rainbow} (or \emph{heterochromatic})
if no two of its edges receive the same color. We are concerned with
rainbow paths and, to a lesser extent, cycles in proper edge colorings of the complete
graph $K_n$.
Hahn conjectured that every proper edge coloring of $K_n$ admits a
Hamiltonian rainbow path (a rainbow path visiting every vertex of $K_n$) (c.f.~\cite{Maamoun1984}).
Maamoun and Meyniel~\cite{Maamoun1984} disproved this conjecture
by constructing counterexamples for the case where $n$ is a power of two, as follows.
Let $n=2^m$. Then we can identify the vertices of $K_n$ with distinct elements of the group
$(\integers/2\integers)^m$, and color every edge $\{a,b\}$ of $K_n$
with the sum of the group elements corresponding to $a$ and $b$.
This is a proper edge coloring, because for two edges $\{a,b\}$ and $\{a,c\}$,
the group property implies $a+b\neq a+c$. Maamoun and Meyniel proved that
this coloring admits no Hamiltonian rainbow paths.
The reader is invited
to check this fact for the case of $K_4$.

\begin{center}
  \begin{pspicture}(-4,-2.5)(4,3.5)
    \psset{fillstyle=solid,fillcolor=black}
    \cnode(0,0){.08}{a} \nput{30}{a}{$00$}
    \cnode(0,3){.08}{b} \nput{0}{b}{$01$}
    \cnode(-2.5891,-1.5,){.08}{c} \nput{180}{c}{$11$}
    \cnode(2.5981,-1.5){.08}{d} \nput{0}{d}{$10$}
    \ncline{a}{b}\naput{$01$}
    \ncline{a}{c}\naput{$11$}
    \ncline{a}{d}\nbput{$10$}
    \ncline{b}{c}\nbput{$10$}
    \ncline{b}{d}\naput{$11$}
    \ncline{c}{d}\nbput{$01$}
  \end{pspicture}
\end{center}

Conversely, it is widely believed that in every proper edge coloring of $K_n$, there is a
rainbow path on $n-1$ vertices (see for example~\cite{GyarfasMhalla2010}).
Still, this is far from proved, and to date, the best general lower bound on the number of vertices
in a maximum rainbow path in (a properly edge colored) $K_n$ is $(2n+1)/3$,
as proved by Gyárfás and Mhalla in~\cite{GyarfasMhalla2010}.
The main result of this thesis improves this bound to $(3/4-o(1))n$.
\begin{thm}
  \label{thm:intro}
  In every proper edge coloring of $K_n$, there is a rainbow path of length
  \[\left(\frac{3}{4}-o(1)\right)n\text. \]
\end{thm}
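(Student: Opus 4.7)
The plan is to assume, toward a contradiction, a maximum rainbow path $P = v_1 v_2 \cdots v_k$ of length $k < (3/4 - \epsilon)n$ for some fixed $\epsilon > 0$, and then produce a longer rainbow path. Let $R = V(K_n) \setminus V(P)$, so that $|R| > (1/4 + \epsilon)n$. The basic observation I would use throughout is that if $u$ is the endpoint of any rainbow path $Q$ with $V(Q) = V(P)$, then for every $r \in R$ the edge $ur$ is colored by some color already used on $Q$; otherwise we could extend $Q$ by $r$, contradicting maximality.

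The engine of the argument will be rotations. Starting from $P$ with endpoint $v_k$, a rotation at an internal index $i$ replaces the edge $v_i v_{i+1}$ by $v_i v_k$, producing the path $v_1 \cdots v_i v_k v_{k-1} \cdots v_{i+1}$ with new endpoint $v_{i+1}$; this is rainbow provided $c(v_i v_k)$ does not clash with the other edges of $P$. Iterating rotations from both ends, I would build a set $S$ of endpoints of rainbow paths on $V(P)$. The crucial intermediate lemma should be a lower bound of the form $|S| \geq \alpha n$ valid whenever $k$ is bounded away from $n$; the slack of $\epsilon n$ in the assumed upper bound on $k$ is precisely what should fuel this step.

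Given such an $S$, I would count edges between $S$ and $R$. Each such edge carries one of the $k - 1$ colors of $P$, and since the coloring is proper, each color class induces a matching between $S$ and $R$ of size at most $\min(|S|, |R|)$. A crude double count only yields $k \geq (n+1)/2$; to reach $(3/4)n$ I expect one must refine the argument, for instance by producing both a left-rotation set $S_1$ and a right-rotation set $S_2$ and then considering ``cherries'' $u_1 r u_2$ with $u_1 \in S_1$, $u_2 \in S_2$, $r \in R$, whose two edges contribute two distinct colors of $P$. This should roughly double the information per vertex of $R$, shifting the threshold from $n/2$ to $3n/4$.

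The main obstacle, I expect, is proving that rotations generate enough distinct endpoints, i.e.\ the lower bound on $|S|$. Rotations fail in subtle ways when the rotation edge reuses a color from the ``wrong side'' of $P$, and one must argue either by iterating rotations and tracking how color usage propagates, or by a clever choice of which end of $P$ to rotate at based on color frequencies. The $o(1)$ loss in the bound suggests that once the structural rotation step is in place, a clean averaging over $R$ handles the endgame, but the combinatorial heart of the proof lies in the rotation analysis.
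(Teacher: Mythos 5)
Your high-level plan --- assume a maximum rainbow path $P$ of length below $(3/4-\epsilon)n$, generate structure by rotations, and derive a contradiction by counting --- matches the paper's strategy in spirit, but the two steps that actually carry the proof are missing, and the counting you sketch does not reach $3/4$. Counting $S$--$R$ edges by colors of $P$ gives only $n/2$, as you note; the known $2/3$ bound is already a ``two disjoint sets in $c(P)$'' argument (the colors $c(p_i,p_{i+1})$ for those $p_i$ whose successor is a new neighbor of $p_1$, versus the colors on edges from $p_k$ to $R$, each set of size at least $n-k$). Your ``cherries'' $u_1 r u_2$ still only place two colors per $r$ into $c(P)$, which is morally the same two-set disjointness, so it is unclear they buy anything beyond $2/3$. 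The paper's actual mechanism for the third unit is different: it takes two rotations $\rho_i\cdot P$ and $\rho_j\cdot P$ at indices within $O(1/\epsilon)$ of each other and shows that three edge sets inside $E(P)$ --- the edges carrying colors of $c(p_t,V(P)^c)$, and for each of the two rotated paths the edges following vertices whose successor is a new neighbor of the new start vertex --- are pairwise disjoint up to $O(1/\epsilon)$ error. Since each has size at least $n-t-O(1/\epsilon)$ and they live in a set of size $t+O(1/\epsilon)$, the last two must overlap in at least $3n-4t-O(1/\epsilon)>\epsilon n$ edges, and this overlap yields two vertices that are new neighbors of both $p_i$ and $p_j$ and close together on $P$, which can be rewired into a rainbow path of length $t-O(1/\epsilon)$ using only two colors outside $c(P)$.

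More fundamentally, such a path is not yet a contradiction: it is slightly \emph{shorter} than $t$. The paper spends the first half of the argument showing that one may choose $P$ to be \emph{nice}, meaning all but $O(1/\epsilon)$ vertices of $V(P)^c$ have more than a constant number $a$ of new neighbors inside $V(P)^c$; this is what lets any rainbow path on $V(P)$ of length at least $t-a/3$ that uses at most two new colors and has an endpoint joined by a new color to such a vertex be prolonged by roughly $a/2$ vertices of $V(P)^c$, overshooting $t$. Establishing the existence of a nice maximum path is itself a substantial argument (via the nonexistence of rainbow cycles of length $t$ and a pigeonhole over a family of rotated paths). Your proposal has no analogue of this absorption step, so even with a strong lower bound on your endpoint set $S$, the constant losses incurred by the rotations would block the final contradiction. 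In short: the rotation idea is the right starting point, but the three-set disjointness that produces the $3/4$ threshold and the niceness mechanism that converts a near-maximum path with spare colors into a genuine contradiction both still need to be supplied.
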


Several theorems and conjectures on rainbow cycles can be found in a paper by Akbari, Etesami, Mahini
and Mahmoody~\cite{Akbari2007}.
Most importantly (for our purposes),
it is proved that in every proper edge coloring of $K_n$, there is a rainbow cycle of length at least
$n/2-1$. This result was later improved on by Gyárfás, Ruszinkó, Sarközy and Schelp in~\cite{Gyarfas2011},
where a bound of $(4/7-o(1))n$ is given.

A related topic is that of colorful Hamiltonian cycles in proper edge colorings of $K_n$.  In~\cite{Akbari2007},
it is conjectured that every proper edge coloring of $K_n$ contains a Hamiltonian cycle using
at least $n-2$ colors, and it is proved that there is always one with at least $(2/3-o(1))n$
colors. The construction used in our proof of Theorem \ref{thm:intro} can be used to show
that there are Hamiltonian cycles using at least $(3/4-o(1))n$ colors.

In~\cite{Hahn1986}, Hahn and Thomassen studied
rainbow cycles and paths in \emph{$k$-bounded edge colorings} of $K_n$, that is,
(not necessarily proper) edge colorings that use every color at most $k$ times.
It is shown that for fixed $k$ and large enough $n$, every such coloring contains
a Hamiltonian rainbow path; and the authors conjecture that there are
Hamiltonian rainbow paths even if $k = a n$ for some suitably small constant factor $a$.

\section{Paths with repeated colors}

Generalizing the notion of a rainbow path, we consider paths in $K_n$
that use every color at most a constant number of times. We call such paths \emph{$k$-rainbow paths},
where $k$ is the number of times a color may appear on the path.
We will prove that in every proper edge coloring of $K_n$ and for every integer $k>0$, there are $k$-rainbow
paths on at least $n-O(n/k!)$ vertices. As far as we know, there are no previous results in this direction.

\section{Note on Latin squares}

We now give some motivation for the study of rainbow cycles and paths
by relating it to problems whose nature is not inherently graph-theoretic.

Latin squares have been a popular topic in combinatorics at least since
the times of Euler, who studied them extensively.
An array of $n$ rows and $n$ columns
is called a \emph{Latin square of order $n$} if every number in $[n] = \{1,\dotsc,n\}$
appears exactly once in each of its rows and columns.
A (complete) \emph{transversal} of a Latin square is a
selection of $n$ cells of the square, choosing exactly one from each row and column, such that every
number in $[n]$ is contained in exactly one cell of the selection.
Similarly, a \emph{partial transversal} of a Latin square is
a maximal selection of cells, each cell again being from a different row and column,
such that no two chosen cells contain the same symbol.

As shown by Maillet (1894), there are many Latin squares which do not have complete transversals.
However, a famous conjecture of Ryser (1967) states that every Latin square of odd order has
a transversal and, moreover, Brualdi conjectured that every Latin square of
order $n$ admits a partial transversal of size at least $n-1$.

Proofs for both conjectures seem out of reach, even though it is known
that in every Latin square of order $n$, there are partial transversals of size $n-o(n)$.
All this and more can be found in~\cite{Denes}.

How do Latin squares relate to rainbow cycles, or graph theory in general?
Consider any Latin square $A$, and let $R$ and $C$ denote the sets of its rows and columns respectively.
Then $A$ defines a proper edge coloring of the complete bipartite graph with partite sets
$C$ and $R$, as follows: for $c\in C$ and $r\in R$, the edge $\{c,r\}$ is
colored with the number contained in the cell determined
by the row $r$ and the column $c$. Then a transversal of $A$ corresponds to a perfect bipartite matching in this graph,
in which no two edges use the same color: a rainbow perfect matching.
Rainbow matchings are studied for example in~\cite{Wang2008}.

Conversely, every proper edge coloring of $K_n$ with
$n-1$ colors can be used to construct a Latin square $A$ of order $n$, in the following way.
Let $\{v_1,\dotsc,v_n\}$ denote the vertex set of $K_n$. Then for every $i\in [n]$ and $j\neq i$, let $A_{i,i}=n$
and let $A_{i,j}$ be the color of the edge $\{v_i,v_j\}$.
A complete transversal of this Latin square corresponds to
a 2-regular rainbow subgraph (i.e., a subgraph consisting of vertex-disjoint cycles)
that covers all but at most one vertex of $K_n$. This can be seen as follows. For every vertex $v_i$,
either $A_{i,i}$ is in the transversal, or two cells $A_{i,j}$ and $A_{k,i}$ (with $i\neq j$ and $i \neq k$) are in the transversal.
In the former case, $v_i$ does not belong to the subgraph. In the latter case, the two edges
$\{v_i,v_j\}$ and $\{v_i,v_k\}$ are included in the subgraph. Hence, all included vertices
have degree two.
By the defining property of
a transversal, all selected edges have different colors, so the subgraph is really rainbow,
and there are no cycles of length two. Moreover, for every vertex $v_i$, we have $A_{i,i}=n$,
so at most one vertex does not belong to the subgraph.

\section{Thesis structure}

In Chapter 2, some results on rainbow paths are proved; in particular, we prove Theorem
\ref{thm:intro}.
The chapter also contains an informal overview over the ideas used in the proof.

Chapter 3 takes a look at $k$-rainbow paths.
It is proved that every proper edge coloring of $K_n$ contains a $k$-rainbow path
on at least $(1-2/(k+2)!)n$ vertices.

The final chapter is a short conclusion.

\section{Notation}

In this section, we define the notation used throughout the thesis and briefly introduce
the basic graph-theoretic notions.

\subsection{Sets}

We write $\naturals$ for the set $\{1,2,3,\dotsc\}$ of natural numbers. If $n$ is a natural number, then
we write $[n]$ for the set $\{1,2,\dotsc,n\}$.

We use capital letters for sets.
If $A$ is a set, then $\abs{A}$ is the cardinality of $A$
and $\binom{A}{k}$ is the set of all $k$-element subsets of $A$.
We write $A^c$ for the complement of $A$ (relative to some universe).

\subsection{Graphs}

For graphs, we follow the notation from~\cite{Diestel}, although we will restate the most important
definitions here.

A \emph{graph} is a tuple $(V,E)$, where $V$ is the finite set of \emph{vertices}
and $E \subseteq \binom{V}{2}$ is the set of \emph{edges}.
The \emph{endpoints} of an edge are its elements, and the edge is \emph{incident}
to them and only them.
Two edges are \emph{coincident} if they intersect,
and two vertices $u$ and $v$ are \emph{adjacent} if $\{u,v\}\in E$.
If $G$ is a graph, then we write $V(G)$ for its vertex set and $E(G)$ for its edge set.
A graph $H=(V',E')$ is a \emph{subgraph} of $G=(V,E)$, written $H\subseteq G$, if $V'\subseteq V$ and $E'\subseteq E$.

For a graph $G=(V,E)$ and an arbitrary $e\in \binom{V}{2}$, we write
$G + e$ and $G-e$ for the graphs $(V,E \cup \{e\})$ and $(V,E\setminus \{e\})$.
If $G=(V,E)$ and $H=(V',E')$ are graphs,
then $G\cup H$ denotes the graph $(V\cup V', E\cup E')$.

The \emph{complete graph on $n$ vertices} is the graph with
vertex set $[n]$ and edge set $\binom{[n]}{2}$. It is denoted by $K_n$.

Given a graph $G = (V,E)$, a map $c\colon E \to \naturals$ is called a \emph{proper edge coloring}
(or simply a coloring) of $G$
if for every two coincident edges $e$ and $e'$ of $G$, we have $c(e)\neq c(e')$.
The colors in the image domain $c(E)$ of $c$ are called the colors \emph{used} by $G$, and
we usually write $c(G)$ for this set.
For an edge $\{u,v\}\in E$, we usually write $c(u,v)$ instead $c(\{u,v\})$. Slightly
abusing this notation, if $A$ is a set of vertices, then we also write $c(u,A)$ for
the set $c(E(u,A))$.

\subsection{Cycles and paths}

A \emph{path} is a non-empty graph $P=(V,E)$ of the form
\[V = \{p_1,p_2,\dotsc,p_k\} \quad \text{and}\quad E = \{\{p_1,p_2\},\{p_2,p_3\},\dotsc,\{p_{k-1},p_k\}\},\]
which we usually denote by the sequence $(p_1,p_2,\dotsc,p_k)$. Then $p_1$ and $p_k$ are the
\emph{start} and \emph{end} vertices of $P$, respectively. The number of edges in $E$ is called the \emph{length} of $P$. We call $p_i$ a \emph{$k$-successor} of $p_j$
if $i>j$ and there are at most $k-1$ vertices between $p_i$ and $p_j$ on $P$. In other words,
$p_i$ is a $k$-successor of $p_j$ if $0< i-j\leq k$. Equivalently,
$p_j$ is a \emph{$k$-predecessor} of $p_i$.

If $P = (p_1,p_2,\dotsc,p_k)$ is a path, then the graph $C=P + \{p_k,p_1\}$ is a \emph{cycle},
and $\abs{E(C)}$ is the \emph{length} of $C$. We represent this cycle by the cyclic sequence of
its vertices, for example $C = (p_1,p_2,\dotsc,p_k,p_1)$.
If $G$ is a graph and $H\subseteq G$ is a path or cycle such that $V(H)=V(G)$, then
$H$ is called \emph{Hamiltonian}.

\chapter{Rainbow Paths}

\section{Introduction}

Consider the complete graph $K_n =(V,E)$ with a proper edge coloring $c\colon E \to \naturals$.
Given this coloring, the rainbow paths and cycles are exactly the paths
and cycles in $K_n$ that use every color at most once.

If $P$ is a rainbow path, then we will refer to the colors in $c(P)$ as \emph{old}
and to those in $c(P)^c=c(E)\setminus c(P)$ as \emph{new}. Edges colored with
old colors are \emph{old}, edges colored with new colors are \emph{new}.

In~\cite{GyarfasMhalla2010}, Gyárfás and Mhalla proved that regardless of how the coloring
is chosen, there always are rainbow paths on at least $(2n+1)/3$ vertices.
Now we give the basic idea behind their proof, the details of which we will see later.
Consider any maximum rainbow path $P=(p_1,\dotsc,p_k)$ in $K_n$, and consider an edge
$\{p_i,p_{i+1}\}$ such that $\{p_1,p_{i+1}\}$ is new, as in the following
figure.
\begin{center}
  \begin{pspicture}(-5.2,-1.5)(5.2,2)
    \psset{fillstyle=solid,fillcolor=black}
    \cnode(-5,0){.08}{p1}  \nput{-90}{p1}{$p_1$}
    \cnode(0,0){.08}{pi}   \nput{-90}{pi}{$p_i$}
    \cnode(1,0){.08}{psi}   \nput{-90}{psi}{$p_{i+1}$}
    \cnode(5,0){.08}{pt}   \nput{-90}{pt}{$p_k$}

    \pnode(-2.75,0){a}
    \ncline{p1}{a}
    \pnode(-2.25,0){b}
    \ncline[nodesep=3pt,linestyle=dotted]{a}{b}
    \ncline{b}{pi}

    \pnode(2.75,0){a1}
    \ncline{psi}{a1}
    \pnode(3.25,0){b1}
    \ncline[nodesep=3pt,linestyle=dotted]{a1}{b1}
    \ncline{b1}{pt}

    \psset{fillstyle=none,arcangle=45}
    \ncarc{p1}{psi}
    \ncline[linewidth=1.75pt]{pi}{psi}
  \end{pspicture}
\end{center}
Clearly, any edge $\{p_k,r\}$ with $r\in V(P)^c$ cannot use the color
of $\{p_{i+1},p_i\}$, as otherwise the path
\[ (p_i,\dotsc,p_1,p_{i+1},\dotsc,p_k,r) \]
would be a rainbow path on $\abs{V(P)}+1$ vertices, contradicting
the choice of $P$. Viewed the other way around, we can say that a certain number
of edges in $E(P)$ are not allowed to have colors in $c(p_k,V(P)^c)$. But all
the edges in $E(p_k,V(P)^c)$ must be old, so their colors appear somewhere on the path.
As Gyárfás and Mhalla observed, this conflict leads to the bound $k\geq (2n+1)/3$.

But what if we knew that starting in any vertex $r\in V(P)^c$, there is a
rainbow path in $V(P)^c$ (of a certain minimum length $l$) that uses no colors of $c(P)$?
Assume that this is the case, and, moreover, that given two arbitrary new colors,
this path can be chosen in such a way
that it does not use any one of them. Then instead of forbidding colors
in $c(p_k,V(P)^c)$ to appear only on edges $\{p_i,p_{i+1}\}$ as above, we
can also forbid them to appear on any edge $\{p_i,p_{i+1}\}$ such that
$p_{i+1}$ has an $l$-successor $p_j$ with $c(p_1,p_j)\nin c(P)$.
While this does not immediately
lead to a good bound on the length of $P$, it does give us more flexibility; as we will see
we can now often simply `forget' about constant terms.
This is the main idea behind the upcoming
proof that in every proper edge coloring of $K_n$, there are rainbow paths of length $(3/4-o(1))n$.

Now for some definitions.
For any vertex $v\in V$,  we define the \emph{new neighborhood} of $v$ relative
to a rainbow path $P$ by
\[ \newn{v,P} = \{ u \in V\setminus \{v\} : c(u,v) \nin c(P) \}\text.\]
Analogously, if $C$ is a rainbow cycle, then
\[ \newn{v,C} = \{u \in V\setminus \{v\} : c(u,v) \nin c(C) \}\text. \]

Moreover, for any rainbow path $P=(p_1,\dotsc,p_k)$, we define the sets
\begin{align*} &A(P) = \{ p_i \in V(P) : p_{i+1}\in \newn{p_1,P}\}\\
\shortintertext{and}
&B(P) = \{ p_i \in V(P) : p_{i-1}\in \newn{p_{k},P}\}\text. \end{align*}
Note that these definitions are symmetric in the sense that if $P'=(p_k,\dotsc,p_1)$, then
$A(P)=B(P')$ and $A(P') = B(P)$. Figure \ref{fig:a} serves as a visual aid for
the formal definitions of $A(P)$ and $B(P)$.

\begin{figure}
  \centering
  \begin{pspicture}(-5.2,-2)(5.2,2)
    \psset{fillstyle=solid,fillcolor=black}
    \cnode(-5,0){.08}{p1}  \nput{-90}{p1}{$p_1$}
    \cnode(-3,0){.08}{x}
    \cnode(-2,0){.08}{y}
    \cnode(0,0){.08}{pi}
    \cnode(1,0){.08}{psi}
    \cnode(2,0){.08}{u}
    \cnode(3,0){.08}{v}
    \cnode(5,0){.08}{pk}   \nput{-90}{pk}{$p_k$}

    \pnode(-4.25,0){a}
    \ncline{p1}{a}
    \pnode(-3.75,0){b}
    \ncline[nodesep=3pt,linestyle=dotted]{a}{b}
    \ncline{b}{x}

    \pnode(-1.25,0){a1}
    \ncline{y}{a1}
    \pnode(-.75,0){b1}
    \ncline[nodesep=3pt,linestyle=dotted]{a1}{b1}
    \ncline{b1}{pi}

    \pnode(3.75,0){a2}
    \ncline{v}{a2}
    \pnode(4.25,0){b2}
    \ncline[nodesep=3pt,linestyle=dotted]{a2}{b2}
    \ncline{b2}{pk}

    \psset{fillstyle=none,arcangle=45}
    \ncarc{p1}{psi}
    \ncline{x}{y}
    \ncline{pi}{psi}
    \ncline{psi}{u}
    \ncline{u}{v}
    \ncarc{p1}{y}
    \ncarc{u}{pk}
    \pscircle(-3,0){.25}
    \pscircle(0,0){.25}
    \psframe(2.75,-.25)(3.25,.25)
  \end{pspicture}
  \caption{Vertices in $A(P)$ (circled) and $B(P)$ (framed)}
  \label{fig:a}
\end{figure}

We will be working mostly with maximum rainbow paths, that is,
rainbow paths of maximum length. Clearly, if $P = (p_1,\dotsc,p_k)$ is a maximum
rainbow path, then adding any edge from $E(p_k,V(P)^c)$ to it cannot result
in a rainbow path (otherwise $P$ would not be maximum).
This means that all edges in $E(p_k,V(P)^c)$ use colors that are also used by $P$.
The same argument can be made for edges in $E(p_1,V(P)^c)$.

\begin{proposition}
  \label{prop:maximality}
  If $P=(p_1,\dotsc,p_k)$ is a maximum rainbow path with respect to some coloring $c$,
  then $c(p_1,V(P)^c)\subseteq c(P)$ and $c(p_k,V(P)^c)\subseteq c(P)$.

  In particular, $\abs{A(P)} = \abs{\newn{p_1,P}} \geq n-k$ and $\abs{B(P)} = \abs{\newn{p_k,P}} \geq n-k$.
\end{proposition}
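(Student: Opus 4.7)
The proposition splits into two assertions: a color-inclusion statement derived from the maximality of $P$, and a lower bound on the sizes of $A(P)$, $B(P)$, and the new neighborhoods. I would first establish the color-inclusion statement, then use it together with a straightforward bijection argument and a counting argument to derive the size bounds.

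For the first part, I would argue by contradiction. Suppose some edge $e = \{p_k, r\}$ with $r \in V(P)^c$ satisfies $c(e) \nin c(P)$. Then the sequence $(p_1, \dots, p_k, r)$ defines a rainbow path with $k+1$ vertices, contradicting the maximality of $P$. Hence every edge from $p_k$ to $V(P)^c$ carries a color in $c(P)$, giving $c(p_k, V(P)^c) \subseteq c(P)$. By symmetry (or by applying the same argument to the reversed path $(p_k, \dots, p_1)$), we obtain $c(p_1, V(P)^c) \subseteq c(P)$.

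For the new-neighborhood bound, note that $p_1$ is incident to exactly $n-1$ edges in $K_n$, and since $c$ is proper, the $k-1$ colors in $c(P)$ can each appear at most once among these edges. Hence at most $k-1$ of the $n-1$ neighbors $u$ of $p_1$ satisfy $c(p_1,u) \in c(P)$, which gives $|\newn{p_1,P}| \geq (n-1)-(k-1) = n-k$. The identical argument applied at $p_k$ yields $|\newn{p_k,P}| \geq n-k$.

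It remains to identify $|A(P)|$ with $|\newn{p_1,P}|$. The shift map $\sigma\colon p_i \mapsto p_{i+1}$ is a bijection from $\{p_1,\dots,p_{k-1}\}$ onto $\{p_2,\dots,p_k\}$, and by definition $A(P)$ is the $\sigma$-preimage of $\newn{p_1,P} \cap V(P)$. By the inclusion $c(p_1, V(P)^c) \subseteq c(P)$ established above, no vertex of $V(P)^c$ lies in $\newn{p_1,P}$, so $\newn{p_1,P} \subseteq V(P)\setminus\{p_1\} = \{p_2,\dots,p_k\}$, and consequently $\sigma$ restricts to a bijection between $A(P)$ and $\newn{p_1,P}$. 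Thus $|A(P)| = |\newn{p_1,P}| \geq n-k$, and the corresponding statement for $B(P)$ follows analogously using the reversed path. The only minor subtlety is this bijection bookkeeping — making sure that $\newn{p_1,P}$ is entirely contained in the image $\{p_2,\dots,p_k\}$ of $\sigma$ — but this is precisely what the first part of the proposition buys us.
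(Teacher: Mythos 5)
Your proof is correct and follows essentially the same route as the paper, which justifies the inclusion $c(p_k,V(P)^c)\subseteq c(P)$ by exactly the same extension argument (appending an edge to $r\in V(P)^c$ would yield a longer rainbow path) and leaves the cardinality claims implicit. Your explicit counting of old colors at $p_1$ and the shift bijection between $A(P)$ and $\newn{p_1,P}$ correctly fill in the details the paper omits.
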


Using this simple observation, we directly obtain the following lower bound on the length of
a maximum rainbow path.

\begin{proposition}
  \label{prop:n/2-bound}
  In every proper edge coloring of $K_n$, there are rainbow paths on at least
  $(n+1)/2$ vertices.
\end{proposition}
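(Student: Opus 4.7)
The plan is to extract the bound from a single endpoint of a maximum rainbow path, using Proposition \ref{prop:maximality} directly.

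Let $P = (p_1,\dotsc,p_k)$ be a maximum rainbow path. I would first argue that $\newn{p_1,P}$ is confined to $V(P)\setminus\{p_1\}$. By definition, $p_1 \nin \newn{p_1,P}$. Moreover, Proposition \ref{prop:maximality} states that $c(p_1,V(P)^c) \subseteq c(P)$, so every edge from $p_1$ to a vertex $u \nin V(P)$ is old; hence no such $u$ can lie in $\newn{p_1,P}$. Putting these together, $\newn{p_1,P} \subseteq V(P)\setminus\{p_1\}$, which has exactly $k-1$ elements.

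Combining this with the lower bound $\abs{\newn{p_1,P}} \geq n-k$ from Proposition \ref{prop:maximality} yields
\[ n - k \;\leq\; \abs{\newn{p_1,P}} \;\leq\; k - 1, \]
which rearranges to $k \geq (n+1)/2$, as required.

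There is essentially no obstacle here: the entire argument is immediate once one notices that the maximality hypothesis forces the new neighborhood of an endpoint to fit inside the path. Note also that appealing to both endpoints (and using $A(P), B(P) \subseteq V(P)$) only recovers the weaker bound $k \geq n/2$; the $+1$ improvement comes from the fact that $p_1$ itself is excluded from its own new neighborhood.
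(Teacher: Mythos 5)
Your proof is correct and is essentially the paper's own argument in dual form: both reduce to the inequality $n-k\leq k-1$ via Proposition \ref{prop:maximality}, the paper by counting the $n-k$ distinct colors of $c(p_k,V(P)^c)$ inside the $(k-1)$-element set $c(P)$, and you by counting the at least $n-k$ new neighbors of $p_1$ inside the $(k-1)$-element set $V(P)\setminus\{p_1\}$. No gap; the concluding aside about the two-endpoint variant is immaterial to the argument.
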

\begin{proof}
  Consider an arbitrary proper edge coloring $c$ of $K_n$,
  and let $P = (p_1,\dotsc,p_k)$ be a maximum rainbow path in this coloring.
  We have
  \[ \abs{c(p_k,V(P)^c)} = \abs{V(P)^c} = n-k\text, \]
  and
  \[ \abs{c(P)} = \abs{E(P)} = k-1\text. \]

  By Proposition \ref{prop:maximality}, we have $c(p_k,V(P)^c)\subseteq c(P)$, and hence
  $\abs{c(p_k,V(P)^c)} \leq \abs{c(P)}$. Thus
  $n-k \leq k-1$,
  and so we have
  \[ k \geq \frac{n+1}{2}\text, \]
  as claimed.
\end{proof}

Actually, we proved something slightly stronger, namely, that given any vertex $v$ of $K_n$,
there is a rainbow path on $(n+1)/2$ vertices starting in this vertex. To see this,
revisit the proof and let $P$ be the longest rainbow path starting in $v$,
and observe that the argument made for Proposition \ref{prop:maximality} still applies.

\section{Rotations}

The paths in $K_n$ admit many symmetries; in fact, every
permutation of the vertices of a path results in another path.
We consider a special kind of permutation, which we call rotation here. Rotations were
already used by Pósa in~\cite{Posa1976}.
For every $i\in[k]$, there is a rotation $\rho_i$
which acts on the path $P = (p_1,\dotsc,p_k)$ to produce
the path \[\rho_i \cdot P = (p_i,p_{i-1},\dotsc,p_1,p_{i+1},p_{i+2},\dotsc,p_k)\text,\]
as shown in figure \ref{fig:rotation}.

\begin{figure}
  \centering
  \begin{pspicture}(-5.2,-2)(5.2,2)
    \psset{fillstyle=solid,fillcolor=black}
    \cnode(-5,0){.08}{p1}  \nput{-90}{p1}{$p_1$}
    \cnode(0,0){.08}{pi}   \nput{-90}{pi}{$p_i$}
    \cnode(1,0){.08}{psi}   \nput{-90}{psi}{$p_{i+1}$}
    \cnode(5,0){.08}{pt}   \nput{-90}{pt}{$p_k$}

    \pnode(-2.75,0){a}
    \ncline{p1}{a}
    \pnode(-2.25,0){b}
    \ncline[nodesep=3pt,linestyle=dotted]{a}{b}
    \ncline{b}{pi}

    \pnode(2.75,0){a1}
    \ncline{psi}{a1}
    \pnode(3.25,0){b1}
    \ncline[nodesep=3pt,linestyle=dotted]{a1}{b1}
    \ncline{b1}{pt}

    \psset{fillstyle=none,arcangle=45}
    \ncarc{p1}{psi}
  \end{pspicture}
  \caption{The path $\rho_i\cdot (p_1,\dotsc,p_k)$}
  \label{fig:rotation}
\end{figure}

The point is that
if $P = (p_1,\dotsc,p_k)$ is a rainbow path and
$p_i\in A(P)$, then
$\rho_i\cdot P$ is a rainbow path that does not use the color
$c(p_i,p_{i+1})$, but that is still very similar to $P$. In particular,
$\rho_i\cdot P$ ends in the same vertex as $P$.

\begin{proposition}
  \label{prop:maximality-rot}
  If $P = (p_1,\dotsc,p_k)$ is a maximum rainbow path with respect to some coloring $c$, then
  for every $p_i\in A(P)$ we have $c(p_i,p_{i+1})\nin c(p_k,V(P)^c)$. Similarly,
  for every $p_i\in B(P)$ we have $c(p_i,p_{i-1}) \nin c(p_1,V(P)^c)$.
\end{proposition}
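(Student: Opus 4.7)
The plan is a direct rotation-and-extend argument, following the setup already introduced in the discussion of $\rho_i$. The key observation to keep in mind is that when $p_i \in A(P)$, the edge $\{p_1,p_{i+1}\}$ has a new color (not in $c(P)$), so swapping it for the old edge $\{p_i,p_{i+1}\}$ produces another rainbow path on the same vertex set, with the same endpoint $p_k$.

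First I would make the first statement precise. Fix a maximum rainbow path $P = (p_1,\dotsc,p_k)$ and some $p_i\in A(P)$. By definition of $A(P)$, we have $c(p_1,p_{i+1})\nin c(P)$. Consider the path $\rho_i\cdot P = (p_i,p_{i-1},\dotsc,p_1,p_{i+1},\dotsc,p_k)$. Its edges are those of $P$ except for $\{p_i,p_{i+1}\}$, together with the additional edge $\{p_1,p_{i+1}\}$. Since $P$ is rainbow and $c(p_1,p_{i+1})\nin c(P)$, the path $\rho_i\cdot P$ is also rainbow; moreover it uses exactly the colors $(c(P)\setminus\{c(p_i,p_{i+1})\})\cup\{c(p_1,p_{i+1})\}$, so the color $c(p_i,p_{i+1})$ does \emph{not} appear on $\rho_i\cdot P$. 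Note that $\rho_i\cdot P$ still ends in $p_k$ and has the same vertex set $V(P)$.

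Now I would argue by contradiction: suppose $c(p_i,p_{i+1})\in c(p_k,V(P)^c)$, so there is a vertex $r\in V(P)^c$ with $c(p_k,r) = c(p_i,p_{i+1})$. Extending $\rho_i\cdot P$ by the edge $\{p_k,r\}$ yields a path on $|V(P)|+1$ vertices. Since the only color of $\rho_i\cdot P$ that could clash with $c(p_k,r)$ would be $c(p_i,p_{i+1})$ itself, and that color was precisely removed by the rotation, the extended path is rainbow. This contradicts the maximality of $P$, so $c(p_i,p_{i+1})\nin c(p_k,V(P)^c)$.

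The second claim follows by symmetry: applying the same argument to the reversed path $P' = (p_k,\dotsc,p_1)$ and using $A(P') = B(P)$, together with the fact that $P'$ ends in $p_1$ and satisfies $V(P')^c = V(P)^c$, yields $c(p_i,p_{i-1})\nin c(p_1,V(P)^c)$ for every $p_i\in B(P)$. There is no real obstacle here; the only subtlety is the bookkeeping that the rotation genuinely removes the old color $c(p_i,p_{i+1})$ from the color set of the path (which requires $P$ to be rainbow so that this color appears nowhere else on $P$), which is exactly why we can afford to reintroduce it via an extension at $p_k$.
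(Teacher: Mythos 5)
Your proposal is correct and is essentially the paper's own argument: the paper likewise observes that $\rho_i\cdot P$ is a rainbow path ending in $p_k$ that omits the color $c(p_i,p_{i+1})$, and derives the contradiction by noting that a vertex $v\in V(P)^c$ with $c(p_k,v)=c(p_i,p_{i+1})$ would let $\rho_i\cdot P$ be extended (the paper phrases this as a violation of Proposition~\ref{prop:maximality}), with the second claim by symmetry. No gaps.
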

\begin{proof}
  Suppose that $p_i\in A(P)$ is such that $c(p_i,p_{i+1})\in c(p_k,V(P)^c)$.
  Then there is a vertex $v\in V(P)^c$ such that $c(p_k,v)=c(p_i,p_{i+1})\nin  c(\rho_i\cdot P)$.
  Since $\rho_i\cdot P$ ends in $p_k$ and $v\nin V(P) = V(\rho_i\cdot P)$,
  the rainbow path $\rho_i\cdot P$ violates Proposition
  \ref{prop:maximality}.

  The second part follows by symmetry.
\end{proof}

This fact was used by
by Gyárfás and Mhalla to find rainbow paths on $(2n+1)/3$ vertices. Because
the proof is quite elegant and uses a technique similar to those used later on, we
give it here.

\begin{thm}[{\cite{GyarfasMhalla2010}}]
  \label{thm:gm2010}
  In every proper edge coloring of $K_n$, there is a rainbow path on at least
  $(2n+1)/3$ vertices.
\end{thm}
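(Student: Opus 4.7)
The plan is to take any maximum rainbow path $P=(p_1,\dotsc,p_k)$ and count colors on $P$ in two disjoint ways to force $k-1 \geq 2(n-k)$.

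First, by Proposition \ref{prop:maximality} applied to the endpoint $p_k$, every edge from $p_k$ into $V(P)^c$ must be old; since the coloring is proper these edges carry $n-k$ pairwise distinct colors, giving a subset $c(p_k,V(P)^c)\subseteq c(P)$ of size exactly $n-k$.

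Second, I would look at the set $A(P)$, which by Proposition \ref{prop:maximality} (applied now to the start vertex $p_1$, via $\abs{A(P)}=\abs{\newn{p_1,P}}$) has size at least $n-k$. For each $p_i\in A(P)$ the rotation $\rho_i\cdot P$ is a maximum rainbow path ending in $p_k$, and the edge $\{p_i,p_{i+1}\}$ that was ``cut'' is no longer used; Proposition \ref{prop:maximality-rot} then says precisely that $c(p_i,p_{i+1})\nin c(p_k,V(P)^c)$. Collecting these, the set $X=\{c(p_i,p_{i+1}):p_i\in A(P)\}$ consists of $\abs{A(P)}\geq n-k$ distinct colors of $c(P)$ (distinct because $P$ is rainbow), and $X$ is disjoint from $c(p_k,V(P)^c)$.

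Combining the two disjoint subsets of $c(P)$ yields
\[
 k-1 \;=\; \abs{c(P)} \;\geq\; \abs{X} + \abs{c(p_k,V(P)^c)} \;\geq\; 2(n-k),
\]
which rearranges to $k\geq (2n+1)/3$.

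The only real conceptual step is the middle one: recognizing that a rotation around $p_i\in A(P)$ removes the single color $c(p_i,p_{i+1})$ while leaving the other endpoint fixed, so the full maximality constraint at $p_k$ applies to the rotated path as well. Once this is in hand the argument is just a double count; no further obstacle is expected.
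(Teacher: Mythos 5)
Your proposal is correct and follows essentially the same route as the paper: it uses Proposition \ref{prop:maximality} to place $c(p_k,V(P)^c)$ inside $c(P)$, forms the same set $X=\{c(p_i,p_{i+1}):p_i\in A(P)\}$, invokes Proposition \ref{prop:maximality-rot} for disjointness, and concludes by the same count $k-1\geq 2(n-k)$. No gaps.
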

\begin{proof}
  Consider an arbitrary proper edge coloring $c$ of $K_n$,
  and let $P = (p_1,\dotsc,p_k)$ be a maximum rainbow path with respect to $c$.
  By Proposition \ref{prop:maximality},
  $c(p_k,V(P)^c)\subseteq c(P)$.
  Now let
  \[ X = \{ c(p_i,p_{i+1}) : p_i\in A(P) \}\text. \]
  Then we have $\abs{X} = \abs{A(P)} \geq n-k$.
  Since $X\subseteq c(P)$, we get \[X\cup c(p_k,V(P)^c)\subseteq c(P)\text,\] and hence
  $\abs{X \cup c(p_k,V(P)^c)} \leq \abs{c(P)} = \abs{E(P)} = k-1$.
  By Proposition \ref{prop:maximality-rot}, we have
  $\abs{X \cap c(p_k,V(P)^c)} = \emptyset$, and thus
  \[ \abs{X\cup c(p_k,V(P)^c)} = \abs{X} + \abs{c(p_k,V(P)^c)} \geq n-k + n-k\text,\] and so
  $2n - 2k\leq k-1$, or
  \[ k \geq \frac{2n+1}{3}\text, \]
  which is what we needed to prove.
\end{proof}

Now we come to the main result of this work.

\section{A theorem on the length of rainbow paths}

In this section, we are going to prove the following result, which is equivalent to Theorem
\ref{thm:intro}.

\begin{thm}
  \label{thm:awesome}
  For every $\epsilon > 0$, there is some $n_0 = n_0(\epsilon)$ such that
  for $n> n_0$, every proper edge coloring of $K_n$ contains a rainbow path on at least
   $(3/4-\epsilon)n$ vertices.
\end{thm}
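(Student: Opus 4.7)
The plan is to suppose for contradiction that a maximum rainbow path $P=(p_1,\dotsc,p_k)$ has $k<(3/4-\epsilon)n$, and then to exhibit an extension of the Gyárfás--Mhalla argument giving three essentially disjoint subsets of $c(P)$ of total size exceeding $|c(P)|=k-1$, a contradiction. Propositions~\ref{prop:maximality} and~\ref{prop:maximality-rot} already provide two such subsets, namely $c(p_k,V(P)^c)$ and $X_A=\{c(p_i,p_{i+1}):p_i\in A(P)\}$, each of size at least $n-k$ and disjoint. What is missing is a third forbidden subset of comparable size.

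The central new ingredient, foreshadowed in the introduction to this chapter, is a path-extension lemma: for every $\epsilon>0$ there exists $l=l(\epsilon)$ such that for $n$ large, from any vertex $r\in V(P)^c$ one can find a rainbow path $Q_r$ of length $l$ inside $V(P)^c$ using only colors in $c(P)^c$, and moreover $Q_r$ can be chosen to avoid any two prescribed colors from $c(P)^c$. I would prove this by analysing the auxiliary graph on $V(P)^c$ whose edges are those of new color; the lower bound on the number of new colors at each vertex, together with the assumption $|V(P)^c|\ge\epsilon n$, should allow a variant of Proposition~\ref{prop:n/2-bound} to supply rainbow paths of any fixed length $l$, the avoidance of the two prescribed colors being absorbed into the constants.

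Given the lemma, I would show that $c(p_i,p_{i+1})\nin c(p_k,V(P)^c)$ whenever $p_{i+1}$ admits an $l$-successor $p_j$ on $P$ with $c(p_1,p_j)\nin c(P)$. If instead $c(p_i,p_{i+1})=c(p_k,r)$ for some $r\in V(P)^c$, then the rotation $\rho_{j-1}\cdot P$ is rainbow (because $p_{j-1}\in A(P)$) and ends at $p_k$; a suitable further rotation or direct extension, followed by attaching $Q_r$ chosen to avoid the at most two new colors introduced by the rotations, would produce a rainbow path of length $k+l>k$, contradicting the maximality of $P$. Since there are at least $n-k$ indices $j$ with $c(p_1,p_j)\nin c(P)$, and each such $j$ forbids the $l$ edges $\{p_{j-l-1},p_{j-l}\},\dotsc,\{p_{j-2},p_{j-1}\}$, the resulting enlarged set $X'$ should satisfy $|X_A\cup X'|\ge 2(n-k)-o(n)$ as long as these $l$-blocks are sufficiently spread along $[k]$.

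The main obstacle is the counting step in the clustered case: if all indices $j$ with $c(p_1,p_j)\nin c(P)$ lie in a short window of $[k]$, the $l$-blocks overlap heavily and their union has size only $n-k+O(l)$. Resolving this will require a symmetrisation, for instance applying the analogous argument with the roles of $p_1$ and $p_k$ exchanged to produce a further forbidden subset, or leveraging the structural regularity that clustering imposes locally on $c(P)$. Once a lower bound $|X_A\cup X'|\ge 2(n-k)-o(n)$ is secured in every case, combining with $c(p_k,V(P)^c)$ yields $3(n-k)-o(n)\le k-1$, contradicting $k<(3/4-\epsilon)n$ for $n$ sufficiently large in terms of~$\epsilon$.
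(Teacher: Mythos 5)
Your proposal correctly identifies the motivating idea (it is in fact the heuristic sketched at the start of Chapter~2), and the extension mechanism you describe --- dropping the segment between $p_{i+1}$ and an $l$-successor $p_j$ with $c(p_1,p_j)\nin c(P)$, rerouting through $\{p_1,p_j\}$ and $\{p_k,r\}$, and appending $Q_r$ --- is sound in principle. But there are two genuine gaps. First, the path-extension lemma cannot be proved for arbitrary $r\in V(P)^c$: maximality of $P$ (Proposition~\ref{prop:maximality}) constrains only the edges at $p_1$ and $p_k$, and nothing prevents a vertex of $V(P)^c$ from having all of its new neighbors inside $V(P)$; indeed every edge inside $V(P)^c$ could be old, since there are only about $(n-k)^2/2$ such edges and the $k-1$ colors of $c(P)$, each usable up to $(n-k)/2$ times there, suffice to cover them. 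The paper's Proposition~\ref{prop:long-paths} therefore only holds for vertices of $R(P)$, the vertices of $V(P)^c$ with many new neighbors in $V(P)^c$, and proving that some maximum rainbow path is \emph{nice} (i.e.\ that $R(P)$ misses at most $1/\epsilon$ vertices of $V(P)^c$) is roughly half of the entire proof: it requires excluding rainbow cycles of length $t$, analysing $A(P)\cap B(P)$, and a rotation argument generating $\lceil 1+2/\epsilon\rceil$ candidate paths. Your phrase ``should allow a variant of Proposition~\ref{prop:n/2-bound}'' conceals exactly this work.

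Second, the clustering obstacle you flag is real and is not resolved by your suggested fixes. If the indices $j$ with $c(p_1,p_j)\nin c(P)$ occupy a consecutive window of $[k]$, the union of the forbidden $l$-blocks has size only $n-k+O(l)$ and your count collapses back to the Gy\'arf\'as--Mhalla bound; symmetrising between $p_1$ and $p_k$ does not help, since $\newn{p_1,P}$ and $\newn{p_k,P}$ can both cluster. The paper's endgame is structurally different: it shows that at least $\epsilon n$ vertices $p_i\in A(P)$ satisfy $c(p_i,p_{i+1})\in c(p_1,R(P))$, picks two nearby such pivots $p_i,p_j\in A(P)$, proves $\abs{\newn{p_i,P}\cap \newn{p_j,P}\cap V(P)}>\epsilon n$ by a three-set disjointness count inside $E(\rho_i\cdot P)\cup E(\rho_j\cdot P)$ (this is where $3n-4t>0$ enters), and then uses two close vertices of that intersection to build a path on at least $t-O(1/\epsilon)$ vertices with only two new colors whose endpoint retains a new neighbor in $R(P)$, contradicting Lemma~\ref{lemma:no-paths}. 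You would need an argument of comparable strength for the clustered case before your sketch becomes a proof.
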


The proof proceeds by contradiction, that is, we assume that for some
$\epsilon > 0$, there are no such rainbow paths, and show that we can construct a
rainbow path that is longer than the length
of a (supposedly) maximum rainbow path. There are two steps to how this is done:
\begin{enumerate}
  \parskip 0pt
\item We show that there is a maximum rainbow path $P$ such that almost all (i.e., all but constantly many)
  vertices in $V(P)^c$ have at least some constant number of new neighbors in $V(P)^c$.
\item We show how such a rainbow path can be extended to a longer rainbow path, resulting in a contradiction.
\end{enumerate}

\subsection{Preliminaries}

Let $\epsilon > 0$ and let $n>n_0$ for some suitably large value $n_0 = n_0(\epsilon)$.
We will not give an explicit value for $n_0$, rather we will tacitly assume that
increasing functions of $n$ dominate any constant.
We are given a proper edge coloring $c$ of $K_n$.
Then we denote by $t$ the number of vertices in a maximum length rainbow path, that is,
there are no rainbow paths of length $t$ (on $t+1$ vertices).

We will assume throughout that we have
\[ t \leq \left(\frac{3}{4}-\epsilon\right)n\text,\]
trying to arrive at a contradiction.

During the proof, let $a$ be a `large enough' constant. For example $a = 100+\ceil{100/\epsilon}$
is more than enough.

\subsection{Nice rainbow paths}

For any rainbow path $P$, we define the set
 \[ R(P) = \{ r\in V(P)^c : \abs{\newn{r,P} \cap V(P)^c} > a\}  \text. \]
So $R(P)$ is the set of vertices in $V(P)^c$ that have more than $a$ new neighbors
in $V(P)^c$. Then we have the following definition.

\begin{define}[Nice rainbow path]
  A \emph{nice} rainbow path is a rainbow path $P$ satisfying
  \[ \abs{R(P)}> n-t- 1/\epsilon\text. \]
\end{define}
In other words, nice rainbow paths are such that all but at most $1/\epsilon$ vertices in $V(P)^c$
have more than $a$ new neighbors in $V(P)^c$.
We will be interested in nice maximum rainbow paths, that is, nice rainbow paths on $t$ vertices.
Before showing that such paths exist, we will motivate them by proving that they have some nice properties.

\begin{proposition}
  \label{prop:long-paths}
  If $P$ is nice maximum rainbow path,
  then for any vertex $r\in R(P)$ and any set $F$ of colors, there is a rainbow path $Q$
  starting in $r$ in the subgraph induced by $R(P)$ such that
  $\abs{V(Q)} = \floor{(a-\abs{F}-1/\epsilon)/2}$
  and $c(Q)\cap (F \cup c(P)) = \emptyset$.
\end{proposition}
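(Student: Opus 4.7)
The plan is to construct $Q$ greedily one vertex at a time, starting from $q_1 = r$. The key observation that drives the construction is that every vertex in $R(P)$ has, by definition, more than $a$ new neighbors in $V(P)^c$, so at every step we will have a pool of more than $a$ candidates to choose from; we only need to show that the forbidden choices (those that would take us out of $R(P)$, revisit an already-used vertex, or use a color from $F \cup c(P) \cup c(Q)$) stay small compared to $a$.

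Concretely, suppose inductively that we have built a rainbow path $Q_j = (q_1,\dotsc,q_j)$ with all $q_i \in R(P)$ and $c(Q_j) \cap (F \cup c(P)) = \emptyset$. Since $q_j \in R(P)$, the set $N := \newn{q_j,P} \cap V(P)^c$ has $|N| > a$. To extend, I want $q_{j+1} \in N$ satisfying (i) $q_{j+1} \in R(P)$, (ii) $q_{j+1} \notin V(Q_j)$, (iii) $c(q_j,q_{j+1}) \notin F \cup c(Q_j)$ (the condition $c(q_j,q_{j+1}) \notin c(P)$ is automatic from $q_{j+1} \in N$). I bound the number of vertices in $N$ violating each:
\begin{itemize}
\item Violating (i): at most $|V(P)^c \setminus R(P)| < 1/\epsilon$, because $P$ is nice.
\item Violating (ii): at most $j-1$, since $q_j \notin N$.
\item Violating (iii) because of $F$: at most $|F|$, since $c$ is proper and so for each color $\gamma \in F$ there is at most one $u$ with $c(q_j,u) = \gamma$.
\item Violating (iii) because of $c(Q_j)$: at most $|c(Q_j)| = j-1$, by the same properness argument.
\end{itemize}
The total number of bad candidates in $N$ is strictly less than $1/\epsilon + |F| + 2(j-1)$, so an extension exists whenever $a \geq 1/\epsilon + |F| + 2(j-1)$, i.e.\ whenever $j \leq (a - 1/\epsilon - |F|)/2 + 1$.

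Setting the target length $T = \floor*{(a - |F| - 1/\epsilon)/2}$, I need to be able to extend at steps $j = 1,2,\dotsc,T-1$, and the bound above covers exactly this range (with room to spare). Starting from the single-vertex path $(r)$, which lies in $R(P)$ by hypothesis, $T-1$ successful greedy extensions produce the required rainbow path $Q$ with $|V(Q)| = T$, contained in $R(P)$, and with $c(Q) \cap (F \cup c(P)) = \emptyset$. There is no real obstacle here beyond the bookkeeping; the whole point of the definition of a nice rainbow path and of $R(P)$ is precisely to make the greedy step work with only a constant number of forbidden options per step, and the proof is essentially a direct verification of that design.
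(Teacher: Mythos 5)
Your proposal is correct and follows essentially the same greedy argument as the paper: extend one vertex at a time from $r$, using that the current endpoint lies in $R(P)$ and so has more than $a$ new neighbors in $V(P)^c$, of which fewer than $1/\epsilon$ lie outside $R(P)$ and only $O(|F| + |V(Q_j)|)$ are excluded by properness of the coloring or by revisiting. The paper's bookkeeping differs only cosmetically (it bounds $\abs{\newn{v,P}\cap R(P)\setminus V(Q_i)}$ from below and compares it to $\abs{c(Q_i)\cup F}$ rather than summing the bad candidates), so nothing further is needed.
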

\begin{proof}
  Let $k=\floor{(a-\abs{F}-1/\epsilon)/2}$.
  We construct a series of rainbow paths $Q_1,\dotsc,Q_{k}$ starting in $r$
  and using only vertices in $R(P)$, such that for every $i \in [k]$, we have $\abs{V(Q_i)} = i$. Furthermore,
  we shall make sure that no $Q_i$ uses colors in $F\cup c(P)$. Clearly, the path $Q_k$
  will have the desired properties.

  Let $Q_1 = (r)$ and define $Q_{i+1}$ in terms of $Q_i$ as follows.
  By construction, $Q_i$ starts in $r$ and ends in some vertex $v\in R(P)$.
  If there is a vertex $u\in \newn{v,P} \cap R(P)\setminus V(Q_i)$ such that
  \[ c(u,v)\nin c(Q_i)\cup F\text, \]
  then we can simply define $Q_{i+1}$ to be $Q_i\cup(v,u)$.
  Hence it is enough to prove that $\abs{\newn{v,P} \cap R(P) \setminus V(Q_i)} > \abs{c(Q_i)\cup F}$.

  Because $i< k$, we have
  \[ \abs{c(Q_i)\cup F} \leq \abs{c(Q_i)} + \abs{F} < k + \abs{F}\text. \]
  Since $P$ is nice, we have $\abs{V(P)^c\setminus R(P)} < 1/\epsilon$, and as $v\in R(P)$, we get
  \[ \abs{\newn{v,P} \cap R(P)} > a-1/\epsilon\text, \]
  and hence
  \[ \abs{\newn{v,P} \cap R(P) \setminus V(Q_i)} > a- 1/\epsilon -k\text. \]
  Now the claim follows from
  \[ a-1/\epsilon -k \geq k + \abs{F}\text, \]
  which holds because
  $k \leq (a-\abs{F}-1/\epsilon)/2$.
\end{proof}

The importance of this fact comes from the following lemma.

\begin{lemma}
  \label{lemma:no-paths}
  If $P$ is a nice maximum rainbow path, then
  every rainbow path \[Q = (q_1,\dotsc,q_k)\] satisfies at least one of the following properties:
  \begin{enumerate}[(P1)]
    \parskip 0pt
    \item $V(Q)\not\subseteq V(P)$,
    \item $(\newn{q_1,Q}\cup \newn{q_k,Q}) \cap R(P)= \emptyset$,
    \item $\abs{c(Q)\setminus c(P)}> 2$, or
    \item $k< t-a/3$.
  \end{enumerate}
\end{lemma}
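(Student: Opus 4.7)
The plan is to prove the contrapositive by contradiction: assuming $Q$ violates all of (P1)--(P4), I will build a rainbow path on more than $t$ vertices, contradicting the choice of $t$ as the length of a maximum rainbow path. So suppose $V(Q)\subseteq V(P)$, that $(\newn{q_1,Q}\cup \newn{q_k,Q})\cap R(P)\neq \emptyset$, that $|c(Q)\setminus c(P)|\leq 2$, and that $k\geq t-a/3$. By symmetry (replace $Q$ by its reverse if necessary), I may assume there is some $r\in \newn{q_1,Q}\cap R(P)$. Since $r\in R(P)\subseteq V(P)^c$ and $V(Q)\subseteq V(P)$, the vertex $r$ is not on $Q$, so $Q$ can be prolonged by the edge $\{q_1,r\}$: that edge is not of a color used on $Q$, precisely because $r\in \newn{q_1,Q}$.

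Next I want to grow this extended path further into $R(P)$ using Proposition~\ref{prop:long-paths}. Set $F=(c(Q)\setminus c(P))\cup\{c(q_1,r)\}$, so that $|F|\leq 3$. Applying Proposition~\ref{prop:long-paths} with this $F$ yields a rainbow path $Q''$ starting at $r$, entirely inside the subgraph induced by $R(P)$, of length $|V(Q'')|=\floor{(a-|F|-1/\epsilon)/2}\geq \floor{(a-3-1/\epsilon)/2}$, and with $c(Q'')\cap(F\cup c(P))=\emptyset$. I then concatenate to obtain the sequence
\[(q_k,q_{k-1},\dotsc,q_1,r,\ldots)\]
where the tail starting at $r$ is $Q''$.

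I will check that this sequence is indeed a rainbow path. Its vertex set is $V(Q)\cup V(Q'')$, which is a disjoint union because $V(Q)\subseteq V(P)$ while $V(Q'')\subseteq R(P)\subseteq V(P)^c$. Its color set is $c(Q)\cup\{c(q_1,r)\}\cup c(Q'')$; the first two are disjoint since $r\in \newn{q_1,Q}$, and $c(Q'')$ is disjoint from both because $c(Q)\subseteq F\cup c(P)$ and $\{c(q_1,r)\}\subseteq F$, while by construction $c(Q'')\cap(F\cup c(P))=\emptyset$. Therefore the whole thing is a rainbow path on
\[k+\floor*{\frac{a-3-1/\epsilon}{2}}\]
vertices. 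Using $k\geq t-a/3$, it remains to see $\floor{(a-3-1/\epsilon)/2}>a/3$; but with $a=100+\ceil{100/\epsilon}$ this is easy. Hence the constructed path has more than $t$ vertices, contradicting the maximality of $t$.

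The core of the argument is really a bookkeeping task—tracking which colors are forbidden on which part of the extended path—and I expect that to be the only subtle point: one needs to put $c(q_1,r)$ into the forbidden set $F$ when invoking Proposition~\ref{prop:long-paths} (since $r\in\newn{q_1,Q}$ only excludes this color from $c(Q)$, not from $c(P)$, so $Q''$ could otherwise re-use it), and one needs the bound $|c(Q)\setminus c(P)|\leq 2$ from (P3) precisely to keep $|F|$ a small constant so that Proposition~\ref{prop:long-paths} still produces a tail of length $\Theta(a)$. Condition (P4) then guarantees that the combined length beats $t$ once $a$ is chosen large in terms of $1/\epsilon$.
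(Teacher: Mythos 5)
Your proposal is correct and follows essentially the same argument as the paper: pick an endpoint with a new neighbour $r\in R(P)$ using the negation of (P2), set $F=(c(Q)\setminus c(P))\cup\{c(q_1,r)\}$ of size at most $3$, invoke Proposition~\ref{prop:long-paths} to grow a tail inside $R(P)$ avoiding $F\cup c(P)$, and use the negations of (P1) and (P4) to conclude the concatenation is a rainbow path on more than $t$ vertices. The colour bookkeeping you highlight (in particular adding $c(q_1,r)$ to $F$) is exactly the point the paper's proof also relies on.
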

\begin{proof}
  Suppose that there is a rainbow path $Q=(q_1,\dotsc,q_k)$ satisfying none of the properties (P1--4).

  Because (P2) is violated, either $\newn{q_1,Q}\cap R(P)\neq \emptyset$ or
  $\newn{q_k,Q}\cap R(P)\neq \emptyset$.
  In any case, there is an endpoint $q$ of $Q$ such that
  for some vertex $s_1\in R(P)$, we have $c(q,s_1)\nin c(Q)$.
  Then let \[F = \{c(q,s_1)\}\cup (c(Q)\setminus c(P))\text,\]
  and since (P3) is violated, we have $\abs{F}\leq 3$.

  By Proposition \ref{prop:long-paths}, we know that the subgraph induced by $R(P)$ contains
  a rainbow path $S = (s_1,\dotsc,s_l)$ with $l\geq \floor{(a-3-1/\epsilon)/2}$
  such that $c(S)\cap (F\cup c(P)) = \emptyset$. So $S$ does not use any colors in
  \[ F\cup c(P) = \{c(q,s_1)\}\cup c(Q)\cup c(P)\text, \]
  and hence the path $S' = (q_1,s_1,\dotsc,s_l)$ is a rainbow path with $c(S')\cap c(Q)=\emptyset$.

  Because (P1) is violated, we have $V(Q)\cap R(P) = \emptyset$, and thus
  \[ V(Q)\cap V(S') = \{q\}\text.\]
  so $Q\cup S'$ is a rainbow path. As (P4) is violated, we have
  \begin{align*}
    \abs{V(Q\cup S')} &=\abs{V(Q)}+\abs{V(S')}-1\\
    &= k + l-1\\
    &\geq t-\frac{a}{3} + \floor{(a-3-1/\epsilon)/2}-1\\
    &> t-\frac{a}{3} + \frac{a}{2} -\frac{1}{2\epsilon}- 3\\
    &> t\text.
  \end{align*}
  But by definition of $t$, there are no rainbow paths this long.
\end{proof}

Thus Lemma \ref{lemma:no-paths} suggests that we can prove Theorem \ref{thm:awesome} in two steps. Under the assumption
that $t\leq(3/4-\epsilon)n$:
\begin{itemize}
\item Show that there is at least one nice maximum rainbow path $P$.
\item Then show that for this $P$, there is a rainbow path $Q$ violating Lemma \ref{lemma:no-paths}.
\end{itemize}

\subsection{Existence of nice maximum rainbow paths}

In this section, we will prove that there are nice maximum rainbow paths.
First, let us reiterate what it means for a rainbow path not to be nice. If $P$
is a rainbow path that is not nice, then we have
\[ \abs{R(P)}\leq n-t-1/\epsilon\text, \]
or, equivalently,
\[ \abs{R(P)^c} \geq n-(n-t-1/\epsilon) = t+1/\epsilon\text. \]
For any rainbow path, we have $\abs{V(P)}\leq t$, so we get
\[ \abs{R(P)^c\cap V(P)^c} = \abs{R(P)^c\setminus V(P)}\geq 1/\epsilon\text. \]
This means that there are at least $1/\epsilon$ vertices in $V(P)^c$
that have no more than $a$ new neighbors in $V(P)^c$.

\begin{proposition}
  \label{cons:no-cycles}
  Suppose that there are no nice maximum rainbow paths.
  Then there are no rainbow cycles of length $t$.
\end{proposition}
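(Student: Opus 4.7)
The plan is to argue the contrapositive: assuming that a rainbow cycle $C$ of length $t$ exists, I will exhibit a nice maximum rainbow path, contradicting the hypothesis. Write $C = (v_1,\dotsc,v_t,v_1)$. For every edge $e\in E(C)$ the graph $P_e = C - e$ is a rainbow path on $t$ vertices, hence a maximum rainbow path, so it suffices to show that $P_e$ is nice for some $e$.

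The key step is an extension argument. I claim that for every $r\in V(C)^c$ and every $v_i\in V(C)$, the color $c(r,v_i)$ lies in $c(C)$. Indeed, if $c(r,v_i)\nin c(C)$, then removing the cycle edge $\{v_{i-1},v_i\}$ leaves the rainbow path $(v_i,v_{i+1},\dotsc,v_{i-1})$ on $t$ vertices whose color set avoids $c(r,v_i)$, and prepending $r$ produces a rainbow path on $t+1$ vertices, contradicting the definition of $t$.

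Given the claim, properness of $c$ at $r$ forces the $t$ edges from $r$ to $V(C)$ to use $t$ pairwise distinct colors of $c(C)$. Since $\abs{c(C)} = t$, every color of $c(C)$ appears on exactly one such edge, so by properness no edge from $r$ to another vertex of $V(C)^c$ can use an old color. Consequently $\abs{\newn{r,C}\cap V(C)^c} = n-t-1$. For any $e\in E(C)$ we have $c(P_e)\subseteq c(C)$ and hence $\newn{r,P_e}\supseteq \newn{r,C}$; using $t \leq (3/4-\epsilon)n$ we get $n-t-1>a$ provided $n$ is large, so every $r\in V(C)^c$ lies in $R(P_e)$. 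Thus $\abs{R(P_e)} = n-t > n-t-1/\epsilon$, i.e., $P_e$ is nice, the desired contradiction.

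The only substantive obstacle is spotting the right extension move: to absorb the supposed new-colored edge $\{r,v_i\}$ into a rainbow path longer than $C$, one must delete exactly the cycle edge $\{v_{i-1},v_i\}$ incident to $v_i$, freeing precisely the color that the surviving path can afford to lose. Once that move is in hand, the rest reduces to a one-step color count at $r$ together with the trivial comparison $n-t-1>a$, and every choice of $e$ works.
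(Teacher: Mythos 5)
Your proof is correct. It rests on the same key move as the paper's---inserting an outside vertex $r$ into the cycle along a new-colored edge $\{r,v_i\}$ by deleting an adjacent cycle edge, which yields a rainbow path on $t+1$ vertices and so contradicts the definition of $t$---but you run the logic in the opposite direction. The paper stays inside the hypothesis: it takes $P=C-e$, uses the assumed non-niceness of $P$ to find a vertex $v\in V(C)^c$ with at least $n-t-a\geq 2$ new neighbors on the cycle, picks one whose color differs from $c(e)$, and performs the insertion to contradict the maximality of $t$. You instead prove the contrapositive directly: maximality of $t$ forbids any new edge from $V(C)^c$ into $V(C)$, and then the exact count $\abs{c(C)}=t=\abs{V(C)}$ (a feature special to cycles, since a path on $t$ vertices uses only $t-1$ colors) forces every edge from $r$ to another outside vertex to be new, so $\abs{\newn{r,P_e}\cap V(P_e)^c}=n-t-1>a$ and every $P_e$ is nice. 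Your version yields a stronger structural conclusion---every vertex of $V(C)^c$ lies in $R(P_e)$, for every choice of $e$---at the cost of the extra counting step; the paper's version needs only two new neighbors of a single vertex on the cycle and so gets away with less. Both arguments legitimately invoke the standing assumptions $t\leq(3/4-\epsilon)n$ and $n$ large to absorb the constant $a$.
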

\begin{proof}
  Suppose that $C$ is such a cycle. Then removing any edge $e\in E(C)$ from $C$,
  we get a maximum rainbow path $P=C-e$, with $V(P)=V(C)$.

  By assumption, this maximum rainbow path cannot be not nice,
  so there is a vertex $v \in V(C)^c$ such that at least $\abs{\newn{v,P}} - a\geq n-t-a$
  new neighbors of $v$ are in $V(C)$.
  Since, for large enough $n$, we have
  $n-t-a\geq 2$, at
  least one of those neighbors, call it $u$, is such that $c(u,v)\neq c(e)$.
  So $c(u,v)\nin c(P+e) = c(C)$.

  Let $u'$ be a vertex adjacent to $u$ on $C$. Then
  $(C-\{u,u'\}) \cup (u,v)$ is a rainbow path of length $t$. This contradicts the definition of $t$.
\end{proof}

\begin{proposition}
  Suppose that there are no nice maximum rainbow paths, and let
  $P = (p_1,\dotsc,p_t)$ be a maximum rainbow path. Then for every vertex
  $p_i\in A(P)\cap B(P)$ we have $c(p_1,p_{i+1}) = c(p_{i-1},p_t)$.
\end{proposition}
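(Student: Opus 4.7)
Assume for contradiction that $c(p_1,p_{i+1}) \ne c(p_{i-1},p_t)$, and set $\delta=c(p_1,p_{i+1})$, $\gamma=c(p_{i-1},p_t)$, $\alpha=c(p_{i-1},p_i)$, $\beta=c(p_i,p_{i+1})$. By the definitions of $A(P)$ and $B(P)$, both $\delta$ and $\gamma$ lie outside $c(P)$, and by assumption $\gamma\ne\delta$. The plan is to build a rainbow cycle of length $t$, which will contradict Proposition~\ref{cons:no-cycles}. A first useful observation is that the closed sequence
\[ C = (p_1,p_2,\dots,p_{i-1},p_t,p_{t-1},\dots,p_{i+1},p_1) \]
is a rainbow cycle of length $t-1$ on $V(P)\setminus\{p_i\}$: its edge set is $E(P)$ minus the two edges incident to $p_i$ plus the two new-colored edges $\{p_{i-1},p_t\}$ and $\{p_1,p_{i+1}\}$, so its color set $(c(P)\setminus\{\alpha,\beta\})\cup\{\gamma,\delta\}$ has $t-1$ distinct elements. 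Equivalently, applying the rotation $\rho_i$ to $P$ and then the analogous rotation from the other endpoint using the new edge $\{p_{i-1},p_t\}$ produces the maximum rainbow path
\[ P^* = (p_i,p_{i-1},p_t,p_{t-1},\dots,p_{i+1},p_1,p_2,\dots,p_{i-2}). \]

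The core of the argument is to splice $p_i$ back into $C$: for each edge $\{u,u'\}$ of $C$ with color $\kappa$, replacing it with $\{u,p_i\}$ and $\{p_i,u'\}$ gives a cycle of length $t$, rainbow exactly when $c(p_i,u)$ and $c(p_i,u')$ are distinct and neither lies in $c(C)\setminus\{\kappa\}$. Proposition~\ref{cons:no-cycles} forces every such splice to fail. The two most informative insertion points are at the new-colored edges $\{p_{i-1},p_t\}$ and $\{p_1,p_{i+1}\}$ of $C$: at those edges the ``$p_i$-side'' colors are $\alpha$ and $\beta$, which automatically lie outside $c(C)$, so the failures of these insertions pin down
\[ c(p_i,p_t) \in (c(P)\setminus\{\alpha,\beta\})\cup\{\delta\} \quad\text{and}\quad c(p_1,p_i) \in (c(P)\setminus\{\alpha,\beta\})\cup\{\gamma\}. \]
The same two constraints arise by closing $\rho_i\cdot P$ with the edge $\{p_i,p_t\}$ and closing its symmetric counterpart (the maximum rainbow path ending at $p_i$ obtained by rotating from the $p_t$-end) with $\{p_1,p_i\}$, and then invoking Proposition~\ref{cons:no-cycles}.

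The main obstacle is to rule out every remaining configuration of $c(p_i,p_t)$ and $c(p_1,p_i)$. I would case-split on whether $c(p_i,p_t)=\delta$ or $c(p_i,p_t)$ is some old color $c(p_j,p_{j+1})$, and analogously for $c(p_1,p_i)$ versus $\gamma$; in each case the equality is exploited either to make a further insertion of $p_i$ into $C$ succeed at the appropriate edge $\{p_j,p_{j+1}\}$ (the equality forces one of the needed colors to coincide with the removed color, so only the other side must avoid $c(C)\setminus\{c(p_j,p_{j+1})\}$), or to make a direct single-edge closure of $P^*$ (with $\{p_i,p_{i-2}\}$) or its analogue succeed, giving a rainbow cycle of length $t$. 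Propositions~\ref{prop:maximality} and~\ref{prop:maximality-rot} applied to the family of maximum rainbow paths $P,\rho_i\cdot P, P^*,\ldots$ supply the additional forbiddances on the edges out of $p_i$ (for instance, $c(p_{i-2},p_{i-1})$ is blocked from $c(p_i,V(P)^c)$) that make the bookkeeping tractable. Keeping track of these simultaneous color constraints is the principal technical hurdle; once the right case is isolated, verifying the resulting length-$t$ rainbow cycle is a direct color check.
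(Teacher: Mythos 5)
Your setup (the cycle $C$ of length $t-1$, the observation that $\gamma=c(p_{i-1},p_t)$ and $\delta=c(p_1,p_{i+1})$ are both new and distinct, and the two constraints on $c(p_i,p_t)$ and $c(p_1,p_i)$ obtained from failed splices at the two new edges) is correct and matches the opening of the paper's proof. But the core of your plan --- re-inserting the single vertex $p_i$ into $C$, or closing rotated paths such as $P^*$ with one chord, and case-splitting until some insertion succeeds --- does not close, and you concede as much when you defer ``keeping track of these simultaneous color constraints.'' The obstruction is quantitative: each of your candidate moves is defeated by a single color coincidence, there are only $O(t)$ such local moves available from $p_i$, and nothing forces any of them to succeed; the constraints $c(p_i,p_t)\in(c(P)\setminus\{\alpha,\beta\})\cup\{\delta\}$ and $c(p_1,p_i)\in(c(P)\setminus\{\alpha,\beta\})\cup\{\gamma\}$ are far too weak to isolate a winning case. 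Tellingly, your argument never uses the standing assumption $t\leq(3/4-\epsilon)n$, and it uses the hypothesis ``no nice maximum rainbow path exists'' only through Proposition~\ref{cons:no-cycles}, not directly.

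The paper's proof needs both missing ingredients. Since $P$ is not nice, there is a vertex $v\in V(P)^c$ with $\abs{\newn{v,P}\cap V(P)}\geq\abs{\newn{v,P}}-a$, hence $\abs{\newn{v,C}\cap V(C)}\geq\abs{\newn{v,C}}-a-5\geq n-t-a-6$: it is this \emph{external} vertex $v$, not $p_i$, that gets spliced into $C$. No two vertices of $\newn{v,C}\cap V(C)$ can be adjacent on $C$ (else $(C-\{u,u'\})\cup(u,v,u')$ is a rainbow cycle of length $t$, contradicting Proposition~\ref{cons:no-cycles}), so the set $X$ of edges of $C$ touching $\newn{v,C}\cap V(C)$ has size $2\abs{\newn{v,C}\cap V(C)}\geq 2n-2t-2a-12$. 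Together with the $\geq n-t-a-4$ colors $c(v,y)$ for $y\in V(C)^c\setminus\newn{v,C}$, all of which lie in $c(C)$, one gets about $3n-3t$ colors packed into the $t-1$ colors of $c(C)$; the assumption $t\leq(3/4-\epsilon)n$ forces a collision $c(e)=c(v,u)$ with $e\in X$ and $u\in V(C)^c$, and then $(C-e)\cup(x,v,u)$ is a rainbow path of length $t$, the final contradiction. Without the external vertex and this counting step, the proposition is not within reach of the purely local analysis you propose.
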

\begin{proof}
  Assume by way of contradiction
  that there is a vertex $p_i \in A(P)\cap B(P)$ such that $c(p_1,p_{i+1}) \neq c(p_{i-1},p_t)$.
  By definition, $c(p_1,p_{i+1})\nin c(P)$ and $c(p_t,p_{i-1})\nin c(P)$,
  so the cycle
  \[ C = (p_1,\dotsc,p_{i-1},p_t,p_{t-1},\dotsc,p_{i+1},p_1) \]
  shown in figure \ref{fig:easy-cycle} is a rainbow cycle of length $t-1$.

  \begin{figure}
    \centering
    \begin{pspicture}(-5.2,-2)(5.2,2)
      \psset{fillstyle=solid,fillcolor=black}
      \cnode(-5,0){.08}{x1}  \nput{-90}{x1}{$p_1$}
      \cnode(-1,0){.08}{xpi} \nput{-90}{xpi}{$p_{i-1}$}
      \cnode(0,0){.08}{xi}   \nput{-90}{xi}{$p_i$}
      \cnode(1,0){.08}{xsi}   \nput{-60}{xsi}{$p_{i+1}$}
      \cnode(5,0){.08}{xt}   \nput{-90}{xt}{$p_t$}

      \pnode(-3.25,0){a}
      \ncline{x1}{a}
      \pnode(-2.75,0){b}
      \ncline[nodesep=3pt,linestyle=dotted]{a}{b}
      \ncline{b}{xpi}

      \pnode(2.75,0){a1}
      \ncline{xsi}{a1}
      \pnode(3.25,0){b1}
      \ncline[nodesep=3pt,linestyle=dotted]{a1}{b1}
      \ncline{b1}{xt}

      \ncline[linestyle=dotted,nodesep=3pt]{xa}{xpi}

      \psset{fillstyle=none,arcangle=45}
      \ncarc{xsi}{x1}
      \ncarc{xa}{xt}
      \ncarc{xpi}{xt}
    \end{pspicture}
    \caption{A rainbow cycle if $p_i\in A(P)\cap B(P)$ and $c(p_1,p_{i+1})\neq c(p_{i-1},p_t)$}
    \label{fig:easy-cycle}
  \end{figure}

  By assumption, $P$ could not have been nice, so there is a vertex
  $v \in V(P)^c$ with
  \[ \abs{\newn{v,P}\cap V(P)} \geq \abs{\newn{v,P}} - a\text. \]
  Some thought shows that
  \[ \abs{\newn{v,C} \cap V(C)}\geq \abs{\newn{v,C}}-a-5\text, \]
  because at most three vertices of $\newn{v,P}\cap V(P)$ are in not in
  $\newn{v,C}\cap V(C)$, and at most two vertices of $\newn{v,C}$ are
  not in $\newn{v,P}$.

  But no two vertices $u,u'\in \newn{v,C}\cap V(C)$ can be adjacent in $C$, because otherwise,
  the cycle $(C - \{u,u'\}) \cup (u,v, u')$ would be
  a rainbow cycle of length $t$, contradicting Proposition \ref{cons:no-cycles}.
  In other words, no edge $e\in E(C)$ is incident to two vertices in  $\newn{v,C} \cap V(C)$. Hence,
  if we write $X = \{e \in E(C) : e\cap \newn{v,C} \neq \emptyset\}$, then
  \begin{align*}
    \abs{X} &= 2\abs{\newn{v,C}\cap V(C)}\\
    &\geq 2\abs{\newn{v,C}}-2a-10\\
    &\geq 2n-2t-2a-12\text.
  \end{align*}

  Now let $Y = V(C)^c \cap \newn{v,C}^c$. We have
  \begin{align*}
    \abs{Y} &= \abs{V(C)^c \cap \newn{v,C}^c}\\
    &= \abs{V(C)^c} - \abs{\newn{v,C} \cap V(C)^c}\\
    &= n-t+1 - \abs{\newn{v,C}} + \abs{\newn{v,C}\cap V(C)}\\
    &\geq n-t+1 - \abs{\newn{v,C}} + \abs{\newn{v,C}}-a-5\\
    &= n-t-a-4\text.
  \end{align*}
  Clearly, $c(X) \subseteq c(C)$ and $c(v,Y\setminus \{v\})\subseteq c(C)$.
  Then $c(X)$ and $c(v,Y\setminus \{v\})$ have to intersect, because
  \begin{align*}
    \abs{c(X)} + \abs{c(v,Y\setminus \{v\})} &= \abs{X} + \abs{Y} - 1\\
    &\geq 3n-3t - 3a -17 \\
    &> t-1\\
    &= \abs{c(C)}\text.
  \end{align*}
  Therefore, there exists an edge $e\in X$ with $c(e) = c(v,u)$ for some $u\in V(C)^c$.
  By definition, $e$ is incident to some vertex $x\in V(C)$ with $c(x,v)\nin c(C)$. Then
  the rainbow path $(C-e)\cup (x,v,u)$ is a rainbow path of length $t$, but this contradicts the definition
  of $t$.
\end{proof}

This suggests that one could choose $P$ in such a way that
$\abs{A(P)\cap B(P)}$ is small. Indeed, we can do this, as we will
show next.

\begin{proposition}
  \label{cons:a-b}
  Suppose that there are no nice maximum rainbow paths.
  Then there is a maximum rainbow path $P$ such that $\abs{A(P)\cap B(P)}\leq \epsilon n$.
\end{proposition}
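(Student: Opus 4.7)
The plan is to argue by contradiction. Assume every maximum rainbow path $P = (p_1, \dotsc, p_t)$ satisfies $\abs{A(P) \cap B(P)} > \epsilon n$. By the preceding proposition, each $p_i \in A(P) \cap B(P)$ comes with a \emph{new} color $\phi_i := c(p_1, p_{i+1}) = c(p_{i-1}, p_t)$; these colors are pairwise distinct, since they arise as colors of distinct edges incident to the single vertex $p_1$, which is properly colored. So we obtain more than $\epsilon n$ ``coincidence triples'' $(p_{i-1}, p_i, p_{i+1})$ in $P$, each carrying a distinct new color $\phi_i$.

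I would then choose $P$ extremally — say, among all maximum rainbow paths, one that minimizes $\abs{A(P) \cap B(P)}$ — so that under our hypothesis this minimum still exceeds $\epsilon n$. Using the coincidence triples, I would perform a surgery on $P$ aiming to produce a structure witnessing one of the following contradictions: (a) a rainbow path on more than $t$ vertices, contradicting the definition of $t$; (b) a rainbow cycle of length $t$, contradicting Proposition \ref{cons:no-cycles}; or (c) a maximum rainbow path $P'$ with $\abs{A(P') \cap B(P')}$ strictly smaller than $\abs{A(P) \cap B(P)}$, contradicting minimality of $P$.

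A natural first candidate is a rotation $\rho_i$ at some $p_i \in A(P) \cap B(P)$, yielding the maximum rainbow path $P' = (p_i, p_{i-1}, \dotsc, p_1, p_{i+1}, \dotsc, p_t)$ with shifted start vertex $p_i$ and new-color palette obtained from the old by swapping $c(p_i, p_{i+1})$ for $\phi_i$. Since the identity $c(p_1, p_{j+1}) = c(p_{j-1}, p_t)$ defining the original triples is anchored at $p_1$, and $p_1$ is no longer an endpoint of $P'$, most of the old triples should fail to be triples of $P'$. I would then quantify $A(P') \cap B(P')$ carefully, showing that its size has genuinely dropped below $\abs{A(P) \cap B(P)}$.

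The main obstacle will be the bookkeeping: rotation simultaneously alters the endpoint, the set of new colors, and the predecessor/successor relations along the path, and in principle fresh coincidence triples could emerge in $P'$. If single rotations turn out to be insufficient on their own, I would combine two triples $p_i, p_j \in A(P) \cap B(P)$ (with $i < j$) and splice $P$ through $p_t$, using the two matched pairs $(\{p_1, p_{i+1}\}, \{p_{i-1}, p_t\})$ and $(\{p_1, p_{j+1}\}, \{p_{j-1}, p_t\})$ to build a longer or cyclic rainbow structure directly; the assumed abundance of more than $\epsilon n$ triples should provide enough combinatorial flexibility to force one of the three outcomes (a)--(c) and close the argument.
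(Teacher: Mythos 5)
You have correctly located the two ingredients the paper also uses — the coincidence triples $(p_{i-1},p_i,p_{i+1})$ with $c(p_1,p_{i+1})=c(p_{i-1},p_t)$ supplied by the preceding proposition, and rotations $\rho_i$ at vertices of $A(P)\cap B(P)$ — but the mechanism you propose for extracting a contradiction does not work, and the step you defer (``quantify $A(P')\cap B(P')$ carefully, showing that its size has genuinely dropped'') is precisely where the argument fails. After a rotation, $P'$ is again a maximum rainbow path, so under the contradiction hypothesis it has its \emph{own} collection of more than $\epsilon n$ coincidence triples; nothing forces $\abs{A(P')\cap B(P')}$ to decrease, so minimality of $P$ gives you nothing. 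What is true is not that the \emph{count} drops but that the \emph{identity} of the triples must change: a triple $(u,v,w)$ cannot be a coincidence triple for two maximum rainbow paths with the same end vertex $p_t$ but distinct start vertices $p_1\neq p_1'$, since that would force $c(p_1,w)=c(u,p_t)=c(p_1',w)$, contradicting properness of the coloring at $w$. Your outcome (c) should therefore be replaced by this disjointness statement, and the contradiction must come from running out of room for disjoint triple-sets — which requires many rotations, not one.

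That is exactly what the paper does: it performs $k=\ceil*{1+2/\epsilon}$ successive rotations, always rotating at a vertex of $A(P_{l-1})$ whose start vertex has not yet been used (possible since $\abs{A(P_{l-1})}\geq n-t>k$), producing maximum rainbow paths $P_0,\dotsc,P_k$ with pairwise distinct start vertices and common end vertex $p_t$. The crucial quantitative point is that each path's triples live inside a set $X_l$ of triples anchored at neighbors of $p_t$, and a single rotation changes $X_l$ by at most $4$ elements, so $\abs{X_0\cup\dotsb\cup X_k}\leq 2t+4k$; since each $Y_l$ (the set of coincidence triples of $P_l$) has size exceeding $\epsilon n$ and $k\epsilon n>2t+4k$ for large $n$, pigeonhole forces two of the $Y_l$ to share a triple, which is the contradiction just described. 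Your fallback idea of splicing two triples of a single path through $p_t$ does not substitute for this: within one path the preceding proposition has already extracted all it can from a single triple (it yields a rainbow cycle of length $t-1$, and the contradiction there needs the non-niceness hypothesis), and combining two triples of the \emph{same} path gives no new color conflict because both are anchored at the same $p_1$. Without the multi-rotation pigeonhole your plan cannot be completed as stated.
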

\begin{proof}
  Let $k = \ceil{1+2/\epsilon}$ and let $P$ be an arbitrary maximum rainbow path.
  Now we define a sequence of maximum rainbow paths $(P_0,\dotsc,P_k)$ as follows:
  \begin{itemize}
    \parskip 0pt
  \item $P_0 = P$.
  \item $P_l$ is defined in terms of $P_{l-1}=(p_1,\dotsc,p_t)$. Because $P_{l-1}$ is a maximum rainbow path,
    we have $\abs{A(P_{l-1})} \geq n-t > k$. Then choose a vertex $p_i \in A(P_{l-1})$ that
    is \emph{not} the starting vertex of any of the paths $P_0,\dotsc,P_{l-1}$,
    and let $P_l = \rho_i \cdot P_{l-1}$. Then, because $p_i\in A(P_{l-1})$, the path $P_l$
    is a maximum rainbow path starting in $p_i$ and ending in $p_t$.
  \end{itemize}

  By this construction, all paths $P_i$ end in the same vertex $p_t$, but no two different
  paths start in the same vertex.
  We now show that at least one of the paths $P_l$ must
  satisfy $\abs{A(P_l)\cap B(P_l)}\leq \epsilon n$. By way of contradiction,
  assume that for every $l\in \{0,\dotsc,k\}$, we have
  \[ \abs{A(P_l)\cap B(P_l)} > \epsilon n\text.\]
  For every path $P_l = (p_1,\dotsc,p_t)$, we define two sets $X_l$ and $Y_l$.

  Let $X_l$ be the smallest set
  that, for every vertex $p_i\in \newn{p_t,P_l}$, contains the triples $(p_i,p_{i+1},p_{i+2})$ and
  $(p_i,p_{i-1},p_{i-2})$, if those vertices are defined.
  Observe that $\abs{X_l\setminus X_{l-1}} \leq 4$, because the path $P_l$ is built from $P_{l-1}$ by a single rotation.
  Then, since $\abs{X_0} \leq 2t$, we get
  \[ \abs{X_0\cup \dotsb\cup X_k} \leq 2t + 4k\text. \]
  Now let $Y_l$ be the smallest set that, for every vertex $p_i \in A(P_l)\cap B(P_l)$, contains the
  triple $(p_{i-1},p_i,p_{i+1})$. If $p_i\in B(P_l)$, then $p_{i-1}\in \newn{p_t,P_l}$, so we have $Y_l\subseteq X_l$.

  Assuming that $n$ is large enough, we get
  \[ \frac{\abs{X_0\cup \dotsb\cup X_k}}{\abs{Y_l}} < \frac{2t+4k}{\epsilon n} < 1+2/\epsilon\leq k\text.\]
  Thus $k\abs{Y_l}> \abs{X_0\cup \dotsb \cup X_k}$. Since $Y_l\subseteq \abs{X_0\cup \dotsb \cup X_k}$,
  and because there are $k$ sets $Y_l$ altogether, this means that
  there are
  two sets $Y_i$ and $Y_j$ that intersect, say in the triple $(u,v,w)$.
  But if we write $P_i = (p_1,\dotsc,p_t)$ and $P_j=(p_1',\dotsc,p_{t-1}',p_t)$, then either
  $c(p_1,w)\neq c(u,p_t)$ or $c(p_1',w)\neq c(u,p_t)$, contradicting the result above stating that
  for any maximum rainbow path $P=(p_1,\dotsc,p_t)$, every vertex $p_i\in A(P)\cap B(P)$
  satisfies $c(p_1,p_{i+1}) = c(p_{i-1},p_t)$.
\end{proof}

Now we are ready to prove the following.

\begin{lemma}
  There exists at least one nice maximum rainbow path.
\end{lemma}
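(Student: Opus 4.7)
I would argue by contradiction. Assuming no nice maximum rainbow path exists, Proposition \ref{cons:a-b} supplies a maximum rainbow path $P = (p_1,\dotsc,p_t)$ with $\abs{A(P) \cap B(P)} \leq \epsilon n$. Let $W = V(P)^c \setminus R(P)$; the observation just before Proposition \ref{cons:no-cycles} gives $\abs{W} \geq 1/\epsilon$. For each $v \in W$, set $T_v = \newn{v,P} \cap V(P)$. The proper coloring at $v$ forces at least $n-t$ new neighbors in total, and at most $a$ of them lie in $V(P)^c$, so $\abs{T_v} \geq n - t - a$.

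The crux of the argument is the following claim: for every $p_i \in A(P) \cup B(P)$, at most one $v \in W$ satisfies $p_i \in T_v$. To prove it when $p_i \in A(P)$, suppose $p_i \in A(P) \cap T_v$. Then $\rho_i \cdot P$ is a rainbow path of length $t$ using the colors $c(P) \setminus \{c(p_i,p_{i+1})\} \cup \{c(p_1,p_{i+1})\}$. Prepending $v$ produces a rainbow path of length $t+1$ --- contradicting the definition of $t$ --- unless $c(v,p_i) = c(p_1,p_{i+1})$. Hence $c(v,p_i)$ is forced; two distinct witnesses $v, v' \in W$ at the same $p_i$ would give $c(v,p_i) = c(v',p_i)$, contradicting the proper coloring at $p_i$. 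The case $p_i \in B(P)$ is handled identically using the mirrored rotation that turns $P$ into a rainbow path of length $t$ ending at $p_i$.

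With the claim established, I would double count $\sum_{i=1}^{t} d_i$, where $d_i = \abs{\{v \in W : p_i \in T_v\}}$. From below, $\sum_i d_i = \sum_{v \in W} \abs{T_v} \geq \abs{W}(n-t-a)$. From above, the key claim together with the trivial bound $d_i \leq \abs{W}$ for $p_i \nin A(P) \cup B(P)$ gives
\[ \sum_i d_i \leq \abs{A(P) \cup B(P)} + (t - \abs{A(P) \cup B(P)})\abs{W}. \]
Using $\abs{A(P) \cup B(P)} \geq 2(n-t) - \epsilon n$ (from Proposition \ref{prop:maximality} together with our choice of $P$) and $t \leq (3/4 - \epsilon)n$, a short calculation reduces this to $\abs{W}(3\epsilon n - a) \leq t \leq n$. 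For $n$ large enough this forces $\abs{W} < 1/\epsilon$, contradicting $\abs{W} \geq 1/\epsilon$.

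The main obstacle is the key claim: seeing that a single rotation-plus-extension step already forces the color equation $c(v,p_i) = c(p_1,p_{i+1})$, which can then be pitted against proper coloring at $p_i$ to rule out two witnesses. Once that is in hand, it is the small size of $A(P) \cap B(P)$ from Proposition \ref{cons:a-b} --- making $\abs{A(P) \cup B(P)}$ as large as $2(n-t) - o(n)$ --- that allows the counting to overwhelm the lower bound and close the argument.
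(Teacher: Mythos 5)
Your proposal is correct and follows essentially the same route as the paper: assume no nice maximum rainbow path exists, take $P$ from Proposition \ref{cons:a-b} with $\abs{A(P)\cap B(P)}\leq \epsilon n$, observe that each vertex of $A(P)\cup B(P)$ can have at most one new neighbor among the vertices off the path (your rotation-plus-proper-coloring argument is exactly the justification the paper leaves implicit), and then count incidences between the ``bad'' vertices and $A(P)\cup B(P)$ to force their number below $1/\epsilon$. Your double count over all of $V(P)$ is just a repackaging of the paper's per-vertex bound $\abs{(A\cup B)\cap \newn{r,P}\cap V(P)}\geq 3n-4t-a-\epsilon n\geq \epsilon n$, and the arithmetic closes in the same way.
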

\begin{proof}
  We may assume that there are no nice maximum rainbow paths.

  Then, by Proposition \ref{cons:a-b}, there is a maximum rainbow path $P$ with
  $\abs{A(P)\cap B(P)}\leq \epsilon n$. Writing $A$ and $B$ for
  $A(P)$ and $B(P)$ respectively, this means that
  \[ \abs{A \cup B} = \abs{A} + \abs{B} - \abs{A\cap B} \geq 2n-2t - \epsilon n\text. \]
  Observe that every vertex in $A\cup B$ can have at most one new neighbor in $V(P)^c$,
  as otherwise we would get a rainbow path of length $t$.

  Now suppose that $r$ is a vertex in $V(P)^c$ such that
  \[ \abs{\newn{r,P}\cap V(P)} \geq \newn{r,P}-a\geq n-t-a\text. \]
  For brevity, let us write $X = \newn{r,P}\cap V(P)$.
  Then, using $\abs{A \cup B \cup X}\leq t$,
  \begin{align*}
    \abs{(A \cup B) \cap X}& = \abs{A\cup B} + \abs{X} - \abs{A \cup B \cup X}\\
    & \geq 2n-2t-\epsilon n + n - t-a - t\\
    & = 3n - 4t -a- \epsilon n\\
    & \geq \epsilon n\text.
  \end{align*}

  Therefore, at least $\epsilon n$ new neighbors of $r$
  are in $A\cup B$.
  Hence, by the observation above, there may be at most
  \[ \frac{\abs{A\cup B}}{\epsilon n} \leq \frac{2n-2t-\epsilon n}{\epsilon n} < \frac{1}{\epsilon} \]
  vertices $r\in V(P)^c$ with $\abs{\newn{r,P} \cap V(P)} \geq \abs{\newn{r,P}}- a$. In
  other words, more than $n-t-1/\epsilon$ vertices in $V(P)^c$ satisfy
  \[ \abs{\newn{r,P}\cap V(P)}> a\text, \]
  so $P$ is nice, contradicting our assumption.
\end{proof}

\subsection{Establishing a contradiction}

In the following, let $P = (p_1,\dotsc,p_t)$ be a nice maximum rainbow path, that is, such that
\[ \abs{R(P)} > n-t-1/\epsilon\text.\]
We will now try to reach a contradiction by constructing a rainbow path $Q$ violating
Lemma \ref{lemma:no-paths}. We will start by proving the following useful proposition.

\begin{proposition}
  \label{prop:counting}
  Let $A$ be any subset of $V(P)$. Then for any $k \in \naturals$,
  there are at most $1+t/k$ vertices in $A$ that do not have a $k$-successor in $A$ (on $P$).
\end{proposition}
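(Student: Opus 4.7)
The plan is to collect the ``bad'' vertices into one set and exploit the fact that they are forced to be well-spread along $P$. Specifically, let $S \subseteq A$ denote the set of vertices in $A$ that have no $k$-successor in $A$, and order them as $S = \{p_{j_1}, p_{j_2}, \dotsc, p_{j_m}\}$ with $j_1 < j_2 < \dotsb < j_m$. The goal is then to bound $m$ by $1 + t/k$.

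The key observation is a separation property: for every $\ell \in \{1, \dotsc, m-1\}$ we must have $j_{\ell+1} \geq j_\ell + k + 1$. Indeed, by definition of $S$, none of the indices $j_\ell + 1, \dotsc, j_\ell + k$ (among those not exceeding $t$) can correspond to a vertex of $A$, since otherwise $p_{j_\ell}$ would have a $k$-successor in $A$, contradicting $p_{j_\ell}\in S$. Since $p_{j_{\ell+1}} \in A$ with $j_\ell < j_{\ell+1} \leq t$, the only possibility is $j_{\ell+1} > j_\ell + k$.

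Telescoping this inequality from $\ell = 1$ to $\ell = m-1$ gives $j_m \geq j_1 + (m-1)(k+1)$, and combining with $1 \leq j_1$ and $j_m \leq t$ yields
\[
  m - 1 \leq \frac{j_m - j_1}{k+1} \leq \frac{t-1}{k+1} \leq \frac{t}{k},
\]
so $m \leq 1 + t/k$, as claimed.

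I do not expect a real obstacle here: the argument is a straightforward pigeonhole on positions along $P$. The only minor subtlety is that $p_{j_m}$ itself need not have any vertex of $A$ to its right at all (it may sit within the last $k$ positions of $P$, so that it vacuously lacks a $k$-successor in $A$), but this is harmless because the separation inequality is only required between consecutive elements of $S$ and not past the last one.
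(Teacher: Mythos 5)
Your proof is correct and follows essentially the same pigeonhole/spacing argument as the paper: where you establish the separation $j_{\ell+1}\geq j_\ell+k+1$ and telescope, the paper equivalently counts the $k(\abs{X}-1)+\abs{X}$ distinct vertices forced into the gaps between consecutive bad vertices, arriving at the same bound $(t+k)/(k+1)\leq 1+t/k$. Your explicit handling of the last element of $S$ is the same edge case the paper dispatches by noting only one vertex can lack a $t$-successor.
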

\begin{proof}
  Let $X\subseteq A$ be the set of vertices in $A$ that do not have a $k$-successor in $A$.
  Clearly, there can be only one vertex in $A$ that does not have a $t$-successor in $A$.
  So for all but at most one vertex in $X$, there are $k$ vertices following that vertex that are not in $A$.
  If we also count the vertices in $X$ themselves, then in total we count at least
  $k(\abs{X}-1)+\abs{X}$ vertices. Altogether, there are only $t$ vertices in $P$. Therefore
  \[ k(\abs{X}-1)+\abs{X} \leq t \]
  and hence
  \[ \abs{X} \leq \frac{t+k}{k+1} \leq \frac{t}{k}+1\text. \]
  This completes the proof.
\end{proof}

\begin{proposition}
  \label{prop:a-b-bound}
  We have $\abs{A(P) \cap B(P)} \leq \epsilon n$.
\end{proposition}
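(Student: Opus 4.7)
My plan is to imitate the proof of Proposition~\ref{cons:a-b}, the key new ingredient being a \emph{nice analog} of the unnamed proposition preceding it: for any nice maximum rainbow path $P$, every $p_i \in A(P) \cap B(P)$ satisfies $c(p_1, p_{i+1}) = c(p_{i-1}, p_t)$.

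I would prove this nice analog by contradiction. Suppose $c(p_1,p_{i+1}) \neq c(p_{i-1},p_t)$. Then, exactly as in the unnamed proposition, the cycle $C = (p_1, \dotsc, p_{i-1}, p_t, p_{t-1}, \dotsc, p_{i+1}, p_1)$ is a rainbow cycle of length $t-1$ on $V(P) \setminus \{p_i\}$. First I apply Lemma~\ref{lemma:no-paths} to each of the rainbow paths $Q_e = C - e$ (for $e \in E(C)$): since $V(Q_e) \subseteq V(P)$, $\abs{c(Q_e) \setminus c(P)} \leq 2$, and $\abs{V(Q_e)} = t-1$, the properties (P1), (P3), (P4) all fail for $a > 3$, so (P2) must hold. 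Combining these constraints as $e$ varies over $E(C)$ forces $c(r,u) \in c(C)$ for every $r \in R(P)$ and $u \in V(C)$; since $\abs{V(C)} = \abs{c(C)} = t-1$, the map $u \mapsto c(r,u)$ is a bijection $V(P)\setminus\{p_i\} \to c(C)$, so in particular $c(r,p_i) \nin c(C)$. Second, I apply Lemma~\ref{lemma:no-paths} to the rotated path $\rho_i \cdot P$, for which $V(\rho_i\cdot P)=V(P)$ and $\abs{c(\rho_i\cdot P) \setminus c(P)} = 1$: again (P1), (P3), (P4) all fail, so (P2) gives $c(r,p_i) \in c(\rho_i\cdot P)$ for every $r \in R(P)$. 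A direct computation shows $c(\rho_i\cdot P) \setminus c(C) = \{c(p_{i-1},p_i)\}$, forcing $c(r,p_i) = c(p_{i-1},p_i)$---impossible by proper coloring at $p_i$, since $r \nin V(P)$.

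With the nice analog established, I would follow the rotation-and-pigeonhole template of Proposition~\ref{cons:a-b}: choose $P$ to minimize $\abs{A(P)\cap B(P)}$ over all nice maximum rainbow paths, and suppose for contradiction that this minimum exceeds $\epsilon n$. Build a sequence $P_0 = P, P_1, \dotsc, P_k$ of maximum rainbow paths by rotations, with $k = \ceil{1+2/\epsilon}$, all ending at $p_t$ but starting at distinct vertices. Because each rotation changes any $\abs{\newn{r,P} \cap V(P)^c}$ by at most one and $a$ is much larger than $k$, a minor refinement of the proof of existence of nice maximum rainbow paths (run with the margin $a+k$ in place of $a$, still yielding at most $1/\epsilon$ exceptional vertices) ensures that all of $P_0, \dotsc, P_k$ are themselves nice. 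The triple-pigeonhole argument of Proposition~\ref{cons:a-b} then finds two $Y_i, Y_j$ sharing a common triple $(u,v,w)$; applying the nice analog to both $P_i$ and $P_j$ yields $c(p_1^{(i)},w) = c(u,p_t) = c(p_1^{(j)},w)$, which is impossible by proper coloring because the edges $\{p_1^{(i)},w\}$ and $\{p_1^{(j)},w\}$ share the endpoint $w$ while $p_1^{(i)} \neq p_1^{(j)}$.

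The principal obstacle is the proof of the nice analog, and in particular the creative pairing of two applications of Lemma~\ref{lemma:no-paths}: one to the cycle-derived paths $C-e$ to push $c(r,p_i)$ out of $c(C)$, and a second to the rotated path $\rho_i \cdot P$ to pull $c(r,p_i)$ into $c(\rho_i\cdot P)$, whose combination traps $c(r,p_i)$ at a single color incident to $p_i$. Ensuring that the rotated paths remain nice is a comparatively minor technical matter handled by the generous size of $a$ relative to $k$.
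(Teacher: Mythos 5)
Your route is genuinely different from the paper's. The paper never proves your ``nice analog''; instead it picks, via Proposition~\ref{prop:counting}, a single $p_i\in A(P)\cap B(P)$ whose predecessor $p_{i-1}$ has a $k$-predecessor in $\newn{p_t,P}$ with $k=\ceil{t/(\epsilon n-1)}$, builds from it one long rainbow cycle $C$, and then splits into two cases according to whether some $r\in R(P)$ has a new neighbor on $V(C)$, exhibiting in each case a path that violates Lemma~\ref{lemma:no-paths}. Your proof of the nice analog is correct and quite elegant: applying Lemma~\ref{lemma:no-paths} to every $C-e$ forces $c(r,u)\in c(C)$ for all $u\in V(C)$, the counting $\abs{V(C)}=\abs{c(C)}=t-1$ then pushes $c(r,p_i)$ out of $c(C)$, and the second application to $\rho_i\cdot P$ pins $c(r,p_i)$ to $c(p_{i-1},p_i)$, which properness forbids (you do need $R(P)\neq\emptyset$, which follows from niceness together with $t\leq(3/4-\epsilon)n$). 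What your approach buys is a clean structural statement about nice paths that lets you reuse the rotation-and-pigeonhole machinery of Proposition~\ref{cons:a-b}; what the paper's buys is a shorter, local argument about the one fixed path, with no need to control the niceness of rotated paths.

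There is, however, a genuine gap in your second half. You minimize $\abs{A(P)\cap B(P)}$ over all nice maximum rainbow paths and then invoke a refined existence lemma (margin $a+k$) to guarantee that the rotations of the minimizer are nice. But that refined lemma only produces \emph{some} $(a+k)$-nice path; the minimizer over the class of nice paths may be only barely nice, in which case its rotations need not clear the threshold $a$, the nice analog does not apply to them, and the pigeonhole contradiction collapses. After the natural repair --- start from an arbitrary $(a+k)$-nice maximum rainbow path, rotate, and conclude that not all of $P_0,\dotsc,P_k$ can have $\abs{A(P_l)\cap B(P_l)}>\epsilon n$ --- your method yields only the \emph{existence} of a nice maximum rainbow path with $\abs{A(P)\cap B(P)}\leq\epsilon n$, not the bound for the particular $P$ fixed at the start of the section. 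That weaker conclusion does suffice for Theorem~\ref{thm:awesome}, since one may simply continue the argument with the path so obtained, but it is not the proposition as stated, and the repair additionally obliges you to re-derive the existence lemma (and the propositions feeding into it) with the enlarged constant. The paper's proof sidesteps all of this by arguing directly about the given nice $P$.
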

\begin{proof}
  Suppose that \[\abs{A(P) \cap B(P)} >  \epsilon n\text.\]
  Let $k=\ceil{t/(\epsilon n-1)}$.
  If $p_i \in A(P) \cap B(P)$, then by definition $p_{i-1} \in \newn{p_t,P}$.
  By Proposition \ref{prop:counting}, at most $1+t/k$ vertices in $\newn{p_t,P}$ have no $k$-predecessor in $\newn{p_t,P}$
  on $P$.
  Therefore, there are more than
  \[ \epsilon n - 1-t/k \geq 0\]
  vertices $p_i\in A(P)\cap B(P)$ such that $p_{i-1}$ does have a $k$-predecessor $p_j$ in $\newn{p_t,P}$.
  Take any such vertex;
  at least one of the edges $\{p_t,p_{i-1}\}$ and $\{p_t,p_{j}\}$ is colored differently from $\{p_1,p_{i+1}\}$.
  As shown in figure \ref{fig:cycle-ab}, there is a rainbow cycle $C$ using this edge, of length
  at least \[t-k-1 \geq t-\ceil{t/(\epsilon n-1)}-1\geq t- a/3\text.\] Now we distinguish two cases.

  \begin{figure}
    \centering
    \begin{pspicture}(-5.2,-2)(5.2,2)
      \psset{fillstyle=solid,fillcolor=black}
      \cnode(-5,0){.08}{x1}  \nput{-90}{x1}{$p_1$}
      \cnode(-2,0){.08}{xa}  \nput{-90}{xa}{$p_j$}
      \cnode(-1,0){.08}{xpi} \nput{-90}{xpi}{$p_{i-1}$}
      \cnode(0,0){.08}{xi}   \nput{-90}{xi}{$p_i$}
      \cnode(1,0){.08}{xsi}  \nput{-60}{xsi}{$p_{i+1}$}
      \cnode(5,0){.08}{xt}   \nput{-90}{xt}{$p_t$}

      \pnode(-3.75,0){a}
      \ncline{x1}{a}
      \pnode(-3.25,0){b}
      \ncline[nodesep=3pt,linestyle=dotted]{a}{b}
      \ncline{b}{xa}

      \pnode(2.75,0){a1}
      \ncline{xsi}{a1}
      \pnode(3.25,0){b1}
      \ncline[nodesep=3pt,linestyle=dotted]{a1}{b1}
      \ncline{b1}{xt}

      \ncline[linestyle=dotted,nodesep=3pt]{xa}{xpi}

      \psset{fillstyle=none,arcangle=45}
      \ncarc{xsi}{x1}
      \ncarc{xa}{xt}
      \ncarc{xpi}{xt}
    \end{pspicture}
    \caption{Building a rainbow cycle for $p_i \in A(P)\cap B(P)$}
    \label{fig:cycle-ab}
  \end{figure}

  \begin{enumerate}[{Case }1.]
    \parskip 0pt
  \item There is a vertex $r\in R(P)$ such that for some $v \in V(C)$, we have $c(r,v)\nin c(C)$.
    Let $e\in E(C)$ be adjacent to $v$. Then it is easily verified that
    the path $C-e$ violates Lemma \ref{lemma:no-paths}.

  \item No vertex $r\in R(P)$ has a neighbor $v\in V(C)$ such that $c(r,v) \nin c(C)$.
    Let $r$ be any vertex in $R(P)$. Since there are at most $\abs{c(C)}$
    vertices $u$ such that $c(r,u) \in c(C)$, and because by assumption all of them are in $V(C)$,
    we have $c(r,p_i)\nin c(C)$.
    Then let $e\in E(C)$ be adjacent to $p_{i+1}$; in this case the path $(C-e)\cup (p_{i+1},p_i)$
    violates Lemma \ref{lemma:no-paths}.
  \end{enumerate}
  Both cases result in a contradiction.
\end{proof}

\begin{proposition}
  There are at least $\epsilon n$ vertices $p_i \in A(P)$ such that
  $c(p_i,p_{i+1}) \in c(p_1,R(P))$.
\end{proposition}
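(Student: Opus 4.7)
I will argue by contradiction: assume fewer than $\epsilon n$ vertices $p_i \in A(P)$ satisfy $c(p_i,p_{i+1}) \in c(p_1,R(P))$. Setting $X = \{c(p_i,p_{i+1}) : p_i \in A(P)\} \subseteq c(P)$, the map $p_i \mapsto c(p_i,p_{i+1})$ is a bijection from $A(P)$ onto $X$, so the assumption translates to $|X \cap c(p_1,R(P))| < \epsilon n$. The four key facts at hand are: $|X| = |A(P)| \geq n-t$ (Proposition \ref{prop:maximality}), $|c(p_1,R(P))| = |R(P)| > n-t-1/\epsilon$ (by niceness of $P$), $|c(p_t,V(P)^c)| = n-t$, and $X \cap c(p_t,V(P)^c) = \emptyset$ (Proposition \ref{prop:maximality-rot}).

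Set $\beta = |c(p_1,R(P)) \cap c(p_t,V(P)^c)|$. Applying inclusion-exclusion to $X \cup c(p_1,R(P)) \cup c(p_t,V(P)^c) \subseteq c(P)$ and using the disjointness $X \cap c(p_t,V(P)^c) = \emptyset$ yields
\[ |X \cap c(p_1,R(P))| + \beta \geq |X| + |c(p_1,R(P))| + |c(p_t,V(P)^c)| - (t-1) \geq 3n - 4t + 1 - 1/\epsilon. \]
Since $t \leq (3/4-\epsilon)n$, the right-hand side is at least $4\epsilon n + 1 - 1/\epsilon$, so the contradiction assumption forces $\beta > 3\epsilon n + 1 - 1/\epsilon$.

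Thus more than $3\epsilon n - 1/\epsilon$ vertices $r \in R(P)$ admit some $v_r \in V(P)^c$ (necessarily $v_r \neq r$, since the edges $\{p_1,r\}$ and $\{p_t,r\}$ have distinct colors) with $c(p_1,r) = c(p_t,v_r) = c(p_j,p_{j+1})$ for some index $j = j(r)$. By niceness, $|V(P)^c \setminus R(P)| < 1/\epsilon$, so among these $r$'s one may pick a pair $r, v \in R(P)$ with $v = v_r$. The remaining step, and the main obstacle, is to use this pair together with Proposition \ref{prop:long-paths} to construct a rainbow path $Q$ that violates Lemma \ref{lemma:no-paths} (or directly contradicts the maximality of $P$). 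My approach would be to produce long rainbow paths $S_r, S_v$ in the subgraph induced by $R(P)$, starting at $r$ and $v$ respectively and using only new colors, with disjoint vertex sets avoiding $V(P)$; and then to glue them around a suitable rearrangement of the two halves $(p_1,\ldots,p_j)$ and $(p_{j+1},\ldots,p_t)$ of $P$. The principal difficulty is that the color $c(p_j,p_{j+1})$ appears at both $\{r,p_1\}$ and $\{p_t,v\}$, so a naive concatenation is not rainbow; I would resolve this by rerouting through a further new neighbor of $r$ (or $v$) in $R(P)$, using that $r, v \in R(P)$ each have more than $a$ new neighbors in $V(P)^c$.
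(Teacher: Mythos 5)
Your counting reduction is correct, and it is genuinely different from the paper's: you work with three \emph{color} sets $X=\{c(p_i,p_{i+1}):p_i\in A(P)\}$, $c(p_1,R(P))$ and $c(p_t,V(P)^c)$ inside $c(P)$ and apply inclusion--exclusion together with the disjointness $X\cap c(p_t,V(P)^c)=\emptyset$ from Proposition \ref{prop:maximality-rot}. The arithmetic is right and correctly shows that if the conclusion fails, then $\beta=\abs{c(p_1,R(P))\cap c(p_t,V(P)^c)}>3\epsilon n+1-1/\epsilon$. But that is only a reduction, not a proof: everything now hinges on ruling out a large $\beta$, and this is exactly the step you leave as ``the remaining step, and the main obstacle.'' The sketch you give does not close it. A pair $r,v$ with $c(p_1,r)=c(p_t,v)=c(p_j,p_{j+1})$ is \emph{not} obviously contradictory: Proposition \ref{prop:maximality-rot} already forces $p_j\nin A(P)$ (and symmetrically $p_{j+1}\nin B(P)$), so both rotation-type reconnections of the two halves $(p_1,\dotsc,p_j)$ and $(p_{j+1},\dotsc,p_t)$ are blocked, and every edge that would naturally join $r$, $v$ and the two halves into one path ($\{r,p_1\}$, $\{v,p_t\}$, $\{p_j,p_{j+1}\}$) carries the same color $\gamma$. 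Rerouting $S_r$ or $S_v$ inside $R(P)$ does not change the colors of the edges at $p_1$ and $p_t$, so it cannot resolve the double use of $\gamma$; also note that any path through $r$ or $v$ satisfies (P1) and hence cannot violate Lemma \ref{lemma:no-paths}, so you would have to beat $t$ outright. As it stands, the case $\beta$ large is simply unhandled.

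For comparison, the paper avoids this residual case entirely by working at the level of vertices: it sets $X=\{p_i\in V(P): c(p_i,p_{i+1})\in c(p_1,R(P))\}$, shows $X\cap B(P)=\emptyset$ by a rotation argument (if $c(p_i,p_{i+1})=c(p_1,r)$ with $r\in R(P)$ and $p_{i-1}\in \newn{p_t,P}$, then $(r,p_1,\dotsc,p_{i-1},p_t,\dotsc,p_{i+1})$ is a $t$-vertex rainbow path that can be extended because $r\in R(P)$), and then bounds $\abs{X\cap(A(P)\cup B(P))^c}$ using Proposition \ref{prop:a-b-bound}. The crucial extra ingredient there is the niceness of $P$ (the extendability from $r\in R(P)$), which your argument never invokes; without something of that kind I do not see how to finish along your route.
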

\begin{proof}
  Let \[X = \{ p_i \in V(P) : c(p_{i},p_{i+1}) \in c(p_1,R(P)) \} \text.\]
  First we show that $X\cap B(P) = \emptyset$. This is the case, because if
  $p_i \in X\cap B(P)$, then  $c(p_i,p_{i+1}) = c(p_1,r)$ for some $r\in R(P)$, so
  $P' = (r,p_1,\dotsc,p_{i-1},p_t,p_{t-1},\dotsc,p_{i+1})$ is a rainbow path on $t$ vertices.
  But since $r\in R(P)$, at least one vertex of $\newn{r,P'}$ is not in $V(P')$, so
  $P'$ cannot be maximum.

  Hence $X\cap B(P) = \emptyset$.
  Then we can partition $X$ as follows:
  \[ X = \big(X \cap (A(P)\cup B(P))^c\big) \cup \big(X\cap A(P)\big)\text. \]
  By Proposition \ref{prop:a-b-bound}, we have $\abs{A(P)\cap B(P)}\leq \epsilon n$.
  So we get
  \begin{align*}
    \abs{X \cap (A(P)\cup B(P))^c} &\leq  \abs{V(P)} - \abs{A(P)\cup B(P)}\\
    &=  \abs{V(P)} - \abs{A(P)} -\abs{B(P)} + \abs{A(P)\cap B(P)}\\
    &\leq t - n+t - n+t + \epsilon n\\
    & = 3t-2n +\epsilon n\text,
  \end{align*}
  and hence
  \[ \abs{X\cap A(P)} = \abs{X} - \abs{X \cap (A(P)\cup B(P))^c}
  > 3n - 4t - 1/\epsilon - \epsilon n > \epsilon n\text, \]
  using $\abs{X} = \abs{R(P)}>n-t-1/\epsilon$.
\end{proof}

By Proposition \ref{prop:counting}, at most $1+\epsilon t$ vertices in $A(P)$ do not have a
$\ceil{1/\epsilon}$-successor in $A(P)$.
Thus at least $2\epsilon n -\epsilon t-1 > 0$ vertices $p_i\in A(P)$ satisfy the following properties:
\begin{enumerate}
\parskip 0pt
\item $c(p_i,p_{i+1}) \in c(p_1,R(P))$ and
\item $p_i$ has a $\ceil{1/\epsilon}$-successor $p_j\in A(P)$.
\end{enumerate}
Take any such vertices $p_i$ and $p_j$.

\begin{proposition}\label{prop:inter}
  We have \[\abs{\newn{p_i,P}\cap \newn{p_j,P} \cap V(P)} > \epsilon n\text.\]
\end{proposition}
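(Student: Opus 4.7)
The plan is to establish the inequality by inclusion-exclusion, after first showing that each of $N_i := \newn{p_i,P}\cap V(P)$ and $N_j := \newn{p_j,P}\cap V(P)$ has size at least $n-t-1$. The lower bound combines two ingredients: the trivial count $\abs{\newn{v,P}}\geq n-t$ valid for every vertex (a proper coloring puts at most $\abs{c(P)}=t-1$ colors of $c(P)$ among the $n-1$ edges at $v$), and the observation that the $V(P)^c$-part of $\newn{p_i,P}$ is small because of a rotation argument.

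Here is how I would run the rotation step. Since $p_i\in A(P)$, the rotated path $\rho_i\cdot P$ is rainbow on $t$ vertices, hence a maximum rainbow path; Proposition \ref{prop:maximality} applied at its starting endpoint $p_i$ then gives
\[ c(p_i,V(P)^c)\;\subseteq\;c(\rho_i\cdot P)\;=\;\bigl(c(P)\setminus\{c(p_i,p_{i+1})\}\bigr)\cup\{c(p_1,p_{i+1})\}, \]
so the only possible new neighbor of $p_i$ in $V(P)^c$ is the (at most one) vertex $u$ with $c(u,p_i)=c(p_1,p_{i+1})$. Thus $\abs{\newn{p_i,P}\cap V(P)^c}\leq 1$ and $\abs{N_i}\geq n-t-1$; the symmetric argument using $p_j\in A(P)$ gives $\abs{N_j}\geq n-t-1$. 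Inclusion-exclusion inside $V(P)$, which has size $t$, then yields
\[ \abs{N_i\cap N_j}\;\geq\;\abs{N_i}+\abs{N_j}-t\;\geq\;2(n-t-1)-t\;=\;2n-3t-2, \]
and plugging in the standing hypothesis $t\leq(3/4-\epsilon)n$ rewrites this as $\abs{N_i\cap N_j}\geq(3\epsilon-1/4)n-2$.

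The hard part is making this beat $\epsilon n$ for \emph{all} admissible $\epsilon>0$: the display above only exceeds $\epsilon n$ when $\epsilon>1/8+O(1/n)$, so for very small $\epsilon$ one must extract more from the hypotheses we have not yet used, namely $c(p_i,p_{i+1})\in c(p_1,R(P))$ and $0<j-i\leq\ceil{1/\epsilon}$. My plan for this regime is to try to sharpen $\abs{\newn{p_i,P}\cap V(P)^c}\leq 1$ to $=0$ by exploiting the witness $r\in R(P)$ with $c(p_1,r)=c(p_i,p_{i+1})$: if a candidate $u\in V(P)^c$ with $c(u,p_i)=c(p_1,p_{i+1})$ did exist, then combined with $r$ and the long rainbow continuation inside $R(P)$ supplied by Proposition \ref{prop:long-paths} (applied while forbidding the handful of colors already used), one ought to be able to splice a rainbow path on more than $t$ vertices, contradicting the maximality of $P$. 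A secondary idea, if the sharpening just described is not enough, is to use the proximity of $p_j$ to $p_i$ on $P$ to force overlap among their old neighborhoods in $V(P)$—precisely the slack missing from the raw inclusion-exclusion count above.
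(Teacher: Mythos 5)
Your first two steps are sound: the rotation argument correctly gives $\abs{\newn{p_i,P}\cap V(P)^c}\leq 1$, hence $\abs{N_i},\abs{N_j}\geq n-t-1$, and the inclusion-exclusion bound $\abs{N_i\cap N_j}\geq 2n-3t-2$ follows. But, as you yourself observe, under the standing hypothesis $t\leq(3/4-\epsilon)n$ this only yields $(3\epsilon-1/4)n-2$, which beats $\epsilon n$ only when $\epsilon>1/8+O(1/n)$. The proof is therefore incomplete precisely in the regime that matters, and neither of your two repair ideas closes the gap. Sharpening $\abs{\newn{p_i,P}\cap V(P)^c}\leq 1$ to $=0$ gains an additive constant, whereas you are missing a term of order $(1/4-2\epsilon)n$. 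And the proximity of $p_j$ to $p_i$ is not, by itself, a source of extra overlap: in the paper it is used only to control the symmetric difference of the edge sets of the two rotated paths, i.e., it is bookkeeping rather than the engine of the bound.

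The missing idea is a \emph{third} large set disjoint from both new neighborhoods. The paper works with the rotated paths $P_i=\rho_i\cdot P$ and $P_j=\rho_j\cdot P$, both maximum and both ending in $p_t$, and encodes $\newn{p_i,P}$ and $\newn{p_j,P}$ as sets $X$ and $Y$ of directed edges of these paths (via $A(P_i)$ and $A(P_j)$). It then introduces the set $Z$ of edges in $E_i\cap E_j$ whose colors lie in $c(p_t,V(P)^c)$; by Proposition \ref{prop:maximality} these colors must appear on both paths, so $\abs{Z}\geq n-t-\ceil{1/\epsilon}$, and by Proposition \ref{prop:maximality-rot} applied to $P_i$ and $P_j$ one has $Z\cap X=Z\cap Y=\emptyset$. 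Packing $Z\cup X\cup Y$ into the at most $t-1+\ceil{1/\epsilon}$ edges of $E_i\cup E_j$ gives $\abs{X\cap Y}\geq 3n-4t-2\ceil{1/\epsilon}\geq 4\epsilon n-2\ceil{1/\epsilon}$, which exceeds $\epsilon n+2$ for every $\epsilon>0$ once $n$ is large. Your vertex-level inclusion-exclusion inside $V(P)$ cannot see this constraint, because it has no slot for the colors of $c(p_t,V(P)^c)$; that extra $n-t$ is exactly the difference between your $2n-3t$ and the needed $3n-4t$.
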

\begin{proof}
  Because $p_i,p_j\in A(P)$,
  the paths $P_i = \rho_i \cdot P$ and $P_j = \rho_j \cdot P$ are maximum rainbow paths.

  Both paths $P_i$ and $P_j$ end in the same vertex $p_t$. It will be useful to
  consider $P_i$ and $P_j$ as directed paths, so let $E_i$ and $E_j$ be the
  edge sets of $P_i$ and $P_j$, but with the edges directed towards $p_t$.
  Because $p_j$ is a $\ceil{1/\epsilon}$-successor
  of $p_i$, we have
  \[ \abs{E_i\cup E_j} \leq t-1+\ceil{1/\epsilon} \quad \text{and}\quad \abs{E_i\cap E_j} \geq t-1-\ceil{1/\epsilon}\text. \]

  Since $P_i$ and $P_j$ are maximum rainbow paths, the colors in $c(p_t,V(P)^c)$
  must appear on edges of $P_i$ and $P_j$.
  Let $Z$ be the set of edges in $E_i\cap E_j$ that are colored with
  colors in $c(p_t,V(P)^c)$.
  We have \[\abs{Z} \geq \abs{c(p_t,V(P)^c)} - \ceil{1/\epsilon} = n-t-\ceil{1/\epsilon}\text.\]
  If we write $P_i = (x_1,\dotsc,x_t)$ and $P_j = (y_1,\dotsc,y_t)$, then we can define
  \[X = \{(x_i,x_{i+1}) : x_{i}\in A(P_i) \} \quad \text{and}\quad Y = \{(y_i,y_{i+1}) : y_{i}\in A(P_j) \}\text. \]
  Note that it is sufficient to show that $\abs{X\cap Y}> \epsilon n+2$.
  By maximality of $P_i$ and $P_j$,
  \[ Z \cap X = Z \cap Y = \emptyset\text.\]

  Therefore, we get
  \begin{align*} \abs{Z \cup X \cup Y}
    & = \abs{Z} + \abs{X \cup Y}\\
    & = \abs{Z} + \abs{X} + \abs{Y}
    - \abs{X \cap Y}\\
    & \geq 3n - 3t - \ceil{1/\epsilon}- \abs{X\cap Y}\text.
  \end{align*}

  But $Z \cup X\cup Y\subseteq E_i\cup E_j$, so we have
  \[ \abs{Z \cup X \cup Y} \leq \abs{E_i\cup E_j} \leq t+\ceil{1/\epsilon}\text,\]
  and therefore
  \[ \abs{X\cap Y} \geq 3n-4t-2\ceil{1/\epsilon}
  >\epsilon n+2 \text,\]
  which is what we needed to prove.
\end{proof}

In the following, let us write
\[ X = \newn{p_i,P}\cap \newn{p_j,P}\cap V(P)\text.\]
We have just shown that $\abs{X}\geq \epsilon n$.
Because there are less than $1/\epsilon$ vertices of $P$ between $p_i$ and $p_j$, we can say
that if $n$ is large enough, then either there are at least $\epsilon n /3$ vertices
of $X$ preceding $p_i$ on $P$ (Case 1), or there are $\epsilon n /3$
vertices of $X$ succeeding $p_j$ (Case 2).
Now we distinguish between the Cases 1 and 2.

\begin{enumerate}[{Case }1.]
\parskip 0pt
\item In this case, there are at least $\epsilon n/3$ vertices of $X$ preceding $p_i$.
  Using Proposition \ref{prop:counting}, there are at most $1+\epsilon t/10<\epsilon n/10$ vertices in $X$
  that do not have a $\ceil{10/\epsilon}$-successor in $X$.

  Therefore, at least
  \[ \frac{\epsilon n}{3}-\frac{\epsilon n}{10} = \frac{7\epsilon n}{30} \]
  of the vertices of $X$ preceding $p_i$ have a $\ceil{10/\epsilon}$-successor in $X$
  Note that there is a bijection between the vertices and their successors -- to every vertex in $X$
  corresponds his closest successor in  $X$.
  So by the same argument, of those successors, at least
  \[ \frac{7\epsilon n}{30} - \frac{\epsilon n}{10} = \frac{4\epsilon n}{30} \]
  have themselves a $\ceil{10/\epsilon}$-successor in $X$.

  Hence there is a vertex $u \in X$ preceding $p_i$
  on $P$ that has two $\ceil{20/\epsilon}$-successors in $X$ which also
  precede $p_i$. In particular, $u$ has a $\ceil{20/\epsilon}$-successor $v$ that precedes $p_i$
  and that satisfies $c(p_i,u)\neq c(p_j,v)$.

  Then the path shown in figure \ref{fig:case1}
  is a rainbow path starting in $p_1$ and ending in $p_t$, visiting at least $t-\ceil{1/\epsilon}-\ceil{20/\epsilon}$ vertices,
  that uses only two colors not in $c(P)$.

\item As in the previous case, there is a vertex
  $v \in X$ succeeding $p_j$ on $P$
  that has a $\ceil{20/\epsilon}$-successor $v$ which satisfies $c(p_i,u)\neq c(p_j,v)$.
  The path shown in figure \ref{fig:case2}
  is a rainbow path starting in $p_1$ and ending in $p_t$, visiting at least $t-\ceil{1/\epsilon}-\ceil{20/\epsilon}$ vertices,
  that uses only two colors not in $c(P)$.
\end{enumerate}

\begin{figure}
  \centering
  \begin{pspicture}(-5.2,-2)(5.2,2)
    \psset{fillstyle=solid,fillcolor=black}
    \cnode(-5,0){.08}{p1}  \nput{-90}{p1}{$p_1$}
    \cnode(-2.5,0){.08}{u}  \nput{-90}{u}{$u$}
    \cnode(-1.5,0){.08}{v} \nput{-90}{v}{$v$}
    \cnode(1.5,0){.08}{pi}   \nput{-90}{pi}{$p_i$}
    \cnode(2.5,0){.08}{pj}   \nput{-90}{pj}{$p_{j}$}
    \cnode(5,0){.08}{pt}   \nput{-90}{pt}{$p_t$}

    \pnode(-4,0){a}
    \ncline{p1}{a}
    \pnode(-3.5,0){b}
    \ncline[nodesep=3pt,linestyle=dotted]{a}{b}
    \ncline{b}{u}

    \pnode(3.5,0){a1}
    \ncline{pj}{a1}
    \pnode(4,0){b1}
    \ncline[nodesep=3pt,linestyle=dotted]{a1}{b1}
    \ncline{b1}{pt}

    \psset{fillstyle=none,arcangle=45}
    \ncline{v}{pi}
    \ncarc{u}{pi}
    \ncarc{pj}{v}
  \end{pspicture}
  \caption{Figure for Case 1}
  \label{fig:case1}
\end{figure}

\begin{figure}
  \centering
  \begin{pspicture}(-5.2,-2)(5.2,2)
    \psset{fillstyle=solid,fillcolor=black}
    \cnode(-5,0){.08}{p1}  \nput{-90}{p1}{$p_1$}
    \cnode(-2.5,0){.08}{u}  \nput{-90}{u}{$p_i$}
    \cnode(-1.5,0){.08}{v} \nput{-90}{v}{$p_j$}
    \cnode(1.5,0){.08}{pi}   \nput{-90}{pi}{$u$}
    \cnode(2.5,0){.08}{pj}   \nput{-90}{pj}{$v$}
    \cnode(5,0){.08}{pt}   \nput{-90}{pt}{$p_t$}

    \pnode(-4,0){a}
    \ncline{p1}{a}
    \pnode(-3.5,0){b}
    \ncline[nodesep=3pt,linestyle=dotted]{a}{b}
    \ncline{b}{u}

    \pnode(3.5,0){a1}
    \ncline{pj}{a1}
    \pnode(4,0){b1}
    \ncline[nodesep=3pt,linestyle=dotted]{a1}{b1}
    \ncline{b1}{pt}

    \psset{fillstyle=none,arcangle=45}
    \ncline{v}{pi}
    \ncarc{u}{pi}
    \ncarc{pj}{v}
  \end{pspicture}
  \caption{Figure for Case 2}
  \label{fig:case2}
\end{figure}

Notice that by this construction, the path that we get is always such that
some color in $c(p_1,R(P))$ is not used, namely, $c(p_i,p_{i+1})$.
But since $t-\ceil{1/\epsilon}-\ceil{20/\epsilon} > t-a/3$, in both cases the resulting path violates
Lemma \ref{lemma:no-paths}. This is the desired contradiction.

\section{Conclusion}

We have proved that for every $\epsilon > 0$, every proper edge coloring of the graph $K_n$
contains a rainbow path on
\[t > \left(\frac{3}{4}-\epsilon\right)n\]
vertices, assuming that $n$ is larger than some value $n_0$ depending on $\epsilon$.
This is a significant improvement over Theorem \ref{thm:gm2010}.

In~\cite{Akbari2007}, the authors proved that every proper edge coloring of $K_n$ contains Hamiltonian
cycles on at least $(2/3-o(1))n$ different colors.
Clearly, the extension of a rainbow path of length $t$ to a Hamiltonian cycle in $K_n$ gives
a Hamiltonian cycle using at least $t$ colors.
So as a bonus we get the following corollary to Theorem \ref{thm:awesome}.
\begin{corollary}
  In every proper edge coloring of $K_n$, there are Hamiltonian cycles using at least
  $(3/4-o(1))n$
  different colors.
\end{corollary}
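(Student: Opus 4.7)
The plan is essentially dictated by the sentence immediately preceding the statement: I would obtain a long rainbow path from Theorem \ref{thm:awesome} and then extend it to a Hamiltonian cycle of $K_n$ in an essentially arbitrary fashion. Completeness of $K_n$ makes this trivial, because every cyclic ordering of $V(K_n)$ is a Hamiltonian cycle.

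More precisely, I would fix $\epsilon > 0$ and let $n$ be large enough that Theorem \ref{thm:awesome} applies. Let $P = (p_1, \dotsc, p_\ell)$ be a rainbow path with $\ell \geq (3/4 - \epsilon)n$, and enumerate the remaining vertices as $v_1, \dotsc, v_{n-\ell}$ in an arbitrary order. Because $K_n$ is complete, the cyclic sequence
\[ H = (p_1, p_2, \dotsc, p_\ell, v_1, v_2, \dotsc, v_{n-\ell}, p_1) \]
defines a Hamiltonian cycle of $K_n$ that contains all $\ell - 1$ edges of $P$.

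Since $P$ is rainbow, these $\ell - 1$ edges are colored with $\ell - 1$ pairwise distinct colors, so $H$ uses at least $\ell - 1 \geq (3/4 - \epsilon)n - 1$ different colors. As $\epsilon > 0$ is arbitrary, this yields the claimed $(3/4 - o(1))n$ bound.

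There is really no obstacle here: all of the combinatorial content is packed into Theorem \ref{thm:awesome}, and the only additional ingredient is the trivial observation that augmenting a rainbow subgraph by further edges cannot decrease the number of distinct colors it uses. The only thing worth verifying carefully is the bookkeeping between the length of a path (number of edges) and the number of vertices it contains, to confirm that the lower bound really comes out to $(3/4 - o(1))n$ colors rather than being weakened by a multiplicative factor.
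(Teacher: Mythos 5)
Your proposal is correct and matches the paper's own (one-sentence) justification: extend the rainbow path from Theorem \ref{thm:awesome} to an arbitrary Hamiltonian cycle, which then uses at least as many distinct colors as the path has edges. The $-1$ from counting edges rather than vertices is absorbed into the $o(1)$ term, as you note.
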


\chapter{Paths with Repeated Colors}

\section{Introduction}

In this chapter, we take a look at a natural generalization of rainbow paths.
Given a proper edge coloring of $K_n$, a rainbow path uses no color more than once;
now we allow for \emph{$k$-rainbow paths}, using every color at most a fixed number $k$ of times.
Note that $1$-rainbow paths are just rainbow paths, so we have the following theorem.

\begin{thm}[{\cite{GyarfasMhalla2010}}]
  \label{thm:1-rainbow}
  In every proper edge coloring of $K_n$, there is a $1$-rainbow path on at least
  $(2n+1)/3$ vertices.
\end{thm}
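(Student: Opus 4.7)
The plan is essentially to observe that this theorem requires no new work: by the definition of a $k$-rainbow path (a path using every color at most $k$ times), the case $k=1$ is precisely the definition of a rainbow path. So a $1$-rainbow path in a properly edge colored $K_n$ is exactly a rainbow path in the sense of Chapter 2.

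Given this observation, the statement reduces immediately to Theorem \ref{thm:gm2010}, which is the corresponding Gyárfás–Mhalla bound already proved (via rotations, Proposition \ref{prop:maximality} and Proposition \ref{prop:maximality-rot}) earlier in the excerpt. The plan is therefore simply to note the definitional equivalence and invoke Theorem \ref{thm:gm2010}.

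If instead one wanted to give a self-contained proof without referring back, the argument would mirror the one for Theorem \ref{thm:gm2010}: take a maximum $1$-rainbow path $P = (p_1,\dotsc,p_k)$, use maximality to conclude $c(p_k,V(P)^c)\subseteq c(P)$, consider the set $X = \{c(p_i,p_{i+1}) : p_i\in A(P)\}$ of colors available via rotations at the left endpoint, check that these are disjoint from $c(p_k,V(P)^c)$ (otherwise a rotation yields a longer $1$-rainbow path), and then bound $|X| + |c(p_k,V(P)^c)| \leq |c(P)| = k-1$ to get $2(n-k) \leq k-1$.

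There is no real obstacle here, since nothing about the proof of Theorem \ref{thm:gm2010} used more than the fact that colors do not repeat on the path; the only thing worth mentioning explicitly is the trivial observation that \emph{$1$-rainbow path} and \emph{rainbow path} are synonymous, which serves to motivate the forthcoming generalizations to $k \geq 2$ in the remainder of the chapter.
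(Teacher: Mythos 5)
Your proposal matches the paper exactly: the paper introduces Theorem~\ref{thm:1-rainbow} with the single remark that $1$-rainbow paths are by definition rainbow paths, so the statement is just Theorem~\ref{thm:gm2010} restated. Your sketch of the self-contained rotation argument is also the same one the paper gave for Theorem~\ref{thm:gm2010}, so there is nothing to add.
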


Mostly, we are interested in an asymptotic statement:  how does the length of a maximum
$k$-rainbow path increase with $k$? The following is a simple bound, although for simplicity
we only prove it for powers of two.

\begin{proposition}
  \label{prop:naive}
  Let $n = 2^m$, for some $m \in \naturals$.
  In every proper edge coloring of $K_n$ and for any $k\geq 1$,
  there is a $k$-rainbow path of length $(1-1/2^k)n - O(k)$.
\end{proposition}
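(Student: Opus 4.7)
I would proceed by induction on $k$, using a mildly strengthened base case that allows forbidding one fixed color. The base case $k=1$ is Proposition~\ref{prop:n/2-bound} itself, which (by the remark immediately after its proof) yields a rainbow path of length $(n-1)/2$ starting at any prescribed vertex. A small adaptation gives the variant: for any vertex $v$ and any color $c_0$, there is a rainbow path of length at least $(n-3)/2$ starting at $v$ and not using $c_0$. One simply repeats the maximality argument in the graph obtained from $K_n$ by deleting the single color class $c_0$ (which is a matching), losing only $O(1)$ in length. This color-avoiding variant is what makes the induction work cleanly.

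For the inductive step, assume the proposition holds for $k-1$. Apply the hypothesis to extract a $(k-1)$-rainbow path $P$ in $K_n$ of length $L_1 \geq (1 - 1/2^{k-1})n - O(k)$, with an endpoint $u$, and set $W = V \setminus V(P)$. (If $L_1$ already exceeds $(1-1/2^k)n - O(k)$ we are done, since a $(k-1)$-rainbow path is automatically $k$-rainbow.) Pick any $u' \in W$, and apply the strengthened base case inside the properly edge-colored complete graph $K[W]$ with the color $c(u, u')$ forbidden; this gives a rainbow path $P'$ in $K[W]$ of length at least $(|W|-3)/2$, starting at $u'$ and not using color $c(u, u')$. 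Concatenating $P$, the edge $\{u, u'\}$, and $P'$ yields a single path of length
\[
L_1 + 1 + \frac{|W|-3}{2} \;=\; \frac{L_1}{2} + \frac{n}{2} - O(1) \;\geq\; \left(1 - \frac{1}{2^k}\right) n - O(k).
\]

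The step I expect to be the main obstacle is the color budget at the joining edge. Naively, a color could appear up to $k-1$ times in $P$, once in $P'$, and once as the color of the joining edge itself, for a total of $k+1$ — one too many. This is exactly why $P'$ was engineered to avoid the color $c(u, u')$: under that guarantee, $c(u,u')$ occurs at most $(k-1)+0+1 = k$ times, while any other color occurs at most $(k-1)+1+0 = k$ times, so the concatenation is indeed $k$-rainbow. The assumption $n = 2^m$ is not essential to the mechanism; it is used only to keep the recursive halving of $|W|$ tidy, so that the rounding discrepancies from one level to the next stay absorbed inside the $O(k)$ slack.
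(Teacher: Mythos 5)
Your proof is correct, but it is organized differently from the one in the paper. The paper inducts on the number of vertices: to handle $K_{2n}$ it first extracts a rainbow path on half the vertices starting at the prescribed vertex (the remark after Proposition~\ref{prop:n/2-bound}), and then applies the induction hypothesis with parameter $k-1$ to the untouched half, \emph{starting at the endpoint of the first path}. Because the two pieces share an endpoint rather than being joined by a fresh edge, the color budget is immediate ($1+(k-1)=k$) and no color-avoidance gadget is needed; the price is the doubling structure, which is exactly where the hypothesis $n=2^m$ is used. You instead induct on $k$ with $n$ fixed: build the $(k-1)$-rainbow path first, then append a rainbow half-path on the leftover vertex set $W$. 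This forces a joining edge $\{u,u'\}$, and you correctly identify that the only danger is the color $c(u,u')$ reaching multiplicity $k+1$; your strengthened base case (start at a prescribed vertex and avoid one prescribed color, still of length $(\abs{W}-3)/2$, since the forbidden color contributes at most one edge at the far endpoint in the maximality argument) resolves this cleanly. Your route buys two small things: it works for all $n$, not just powers of two, and the recursion $c_k = c_{k-1}/2 + O(1)$ for the error term shows the loss is in fact $O(1)$ rather than $O(k)$. The paper's route avoids any modification of Proposition~\ref{prop:n/2-bound} at the cost of the dyadic restriction. Both arguments are sound.
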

\begin{proof}
  We actually prove a stronger statement: that for every vertex $v$ of $K_n$, there is a $k$-rainbow path starting
  in $v$ of length $(1-1/2^k)n - O(k)$.
  The proof goes by induction on the number of vertices. In the base case of $K_1$, there are no paths of nonzero length, so the
  claim is trivially satisfied.

  Now assume that the claim is true for some value $n$ and consider the graph $K_{2n}$.
  Let $v$ and $k$ be given. Starting in $v$, there is a rainbow path $P=(v,p_2,\dotsc,p_n)$ on
  $n$ vertices in $K_{2n}$. This was noted after the proof of Proposition \ref{prop:n/2-bound} in
  the previous chapter.
  Then we can invoke the induction hypothesis
  to get a $(k-1)$-rainbow path starting in $p_n$ and avoiding the vertices of $P$, of length
  $(1-1/2^{k-1})n-O(k-1)$.
  Appending the two paths together, we get a $k$-rainbow path of length
  \[ n-1 + \left(1-\frac{1}{2^{k-1}}\right)n - O(k-1) = \left(1-\frac{1}{2^k}\right)
  2n - O(k)\text, \]
  concluding the proof.
\end{proof}

This proof, while simple, already contains an important idea: that we can use $(k-1)$-rainbow paths
to build $k$-rainbow paths.
The rest of this chapter is about improving on this result, but first we define some notation.

Consider the complete graph $K_n = (V,E)$ and a proper edge coloring $c$ of $K_n$.
If $P$ is a $k$-rainbow path with respect to $c$, then let $C_0(P) = c(P)^c$ and for
$i\in [k]$, let $C_i(P)$ be the set of colors used exactly $i$ times
on edges of $P$. Clearly, $c(E) = \bigcup_{i=0}^k C_i(P)$.

Furthermore, for a $k$-rainbow path $P=(p_1,\dotsc,p_t)$, we define
\[C_A(P) = \{c(p_{i},p_{i+1}) : c(p_1,p_{i+1}) \nin C_k(P)\}\text.\]

Now we define what we mean by a maximal $k$-rainbow path.
A $k$-rainbow path $P=(p_1,\dotsc,p_t)$ is \emph{maximal} if it satisfies both
\begin{gather}
  c(p_1, V(P)^c) \subseteq C_k(P)\text.\label{eq:maximality}\\
  \shortintertext{and}
  c(p_t,V(P)^c) \subseteq C_k(P) \setminus C_A(P)
  \label{eq:maximality-rot}
\end{gather}
We shall see that every maximum length $k$-rainbow path is also
maximal in this sense.

\section{Two lemmas on maximal $k$-rainbow paths}

\begin{lemma}
  If $P$ is a $(k-1)$-rainbow path, then
  there is a maximal $k$-rainbow path $P'$ with
  $\abs{C_{k}(P')} \leq \abs{V(P')}-\abs{V(P)}$.
\end{lemma}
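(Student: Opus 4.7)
The plan is to construct $P'$ greedily, starting from $P_0 := P$ and augmenting by a single vertex at each step until we reach a maximal $k$-rainbow path. Since $P$ is a $(k-1)$-rainbow path, no color appears $k$ times on it, so $C_k(P_0) = \emptyset$. The target bound will follow if I can show that each augmentation increases $\abs{C_k}$ by at most one; then telescoping together with $\abs{C_k(P_0)}=0$ yields $\abs{C_k(P')} \le \abs{V(P')}-\abs{V(P)}$.

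Suppose I have already built a non-maximal $k$-rainbow path $P_i = (p_1, \dotsc, p_s)$. Then at least one of the two maximality conditions must fail, so I would proceed as follows. If~\eqref{eq:maximality} fails, I pick $v \nin V(P_i)$ with $c(p_1, v) \nin C_k(P_i)$ and let $P_{i+1}$ be $P_i$ with $v$ prepended. Otherwise \eqref{eq:maximality-rot} fails, and either there is $v \nin V(P_i)$ with $c(p_s, v) \nin C_k(P_i)$ (in which case I simply append $v$ to $P_i$), or there is $v \nin V(P_i)$ with $c(p_s, v) \in C_A(P_i)$. In the last case, the definition of $C_A$ supplies an index $j$ with $c(p_j, p_{j+1}) = c(p_s, v)$ and $c(p_1, p_{j+1}) \nin C_k(P_i)$, and I let $P_{i+1}$ be the path obtained by appending $v$ to $\rho_j \cdot P_i$, which is legitimate since $\rho_j \cdot P_i$ still ends at $p_s$.

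For each step I then need to verify (a) that $P_{i+1}$ is indeed a $k$-rainbow path and (b) that $\abs{C_k(P_{i+1})} \le \abs{C_k(P_i)} + 1$. For the plain prepend and append, both are immediate: a single new edge is added whose color lies outside $C_k(P_i)$, and hence has multiplicity at most $k-1$ in $P_i$; so the multiplicity rises to at most $k$, and at most one color tips into $C_k$. The interesting case is the rotate-and-append step: the rotation removes $\{p_j, p_{j+1}\}$ and adds $\{p_1, p_{j+1}\}$, while the appended edge $\{p_s, v\}$ has exactly the same color as the removed one. Hence the multiplicity of $c(p_j, p_{j+1})$ is unchanged across both operations, and only the multiplicity of $c(p_1, p_{j+1})$ rises, by one, from a value at most $k-1$. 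Thus at most one color enters $C_k$, and the same bookkeeping shows that no color exceeds multiplicity $k$, confirming that $P_{i+1}$ is a $k$-rainbow path.

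The sequence $P_0, P_1, \dotsc$ must terminate since $\abs{V(P_i)} \le n$, yielding a maximal $k$-rainbow path $P'$. Since each step increases $\abs{V(\cdot)}$ by exactly one and $\abs{C_k(\cdot)}$ by at most one, telescoping gives the desired inequality. The only subtle point is the rotate-and-append step, and the main insight making it work is that the edge removed by the rotation and the edge appended at the end are forced to share a color, so their contributions to the color multiplicities cancel and leave only $c(p_1, p_{j+1})$ as a potential new member of $C_k$.
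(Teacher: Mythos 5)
Your proposal is correct and follows essentially the same route as the paper: a greedy one-vertex-at-a-time augmentation with the three cases (prepend at $p_1$, append at $p_t$, and rotate-then-append when only $C_A$ obstructs), together with the observation that the rotated-out edge and the appended edge share a color so that $\abs{C_k}$ grows by at most one per step. The paper phrases the third case as a rotation followed by a reduction to the append case, but the bookkeeping is identical to yours.
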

\begin{proof}
  Let $P$ be any $(k-1)$-rainbow path. Then
  $P$ is also a (non-maximal) $k$-rainbow path with $C_{k}(P) = 0$.
  We will show that for any non-maximal $k$-rainbow path $P$, there is a $k$-rainbow
  path $P'$ with $\abs{V(P')}=\abs{V(P)}+1$ and $\abs{C_k(P')}\leq \abs{C_k(P)} +1$.
  Since we cannot add vertices indefinitely, we will
  eventually get a maximal $k$-rainbow path with the required properties.

  So if $P= (p_1,\dotsc,p_t)$ is a non-maximal $k$-rainbow path, then
  one of the following is the case.
  \begin{enumerate}[{Case }1.]
    \parskip0pt
    \item If $c(p_1, V(P)^c) \not\subseteq C_{k}(P)$, then
      there is an edge $\{p_1,r\}\in E(p_1,V(P)^c)$ colored with a color $c\nin C_{k}(P)$.
      Hence the path $P \cup \{p_1,r\}$ is a $k$-rainbow path and
      $C_{k}(P\cup\{p_1,r\}) = C_{k}(P)+1$.

    \item If $c(p_t, V(P)^c) \not\subseteq C_{k}(P)$, then
      we can proceed just as in the first case,
      after reversing the order of the vertices on $P$.

    \item If $c(p_t, V(P)^c) \subseteq C_{k}(P)$ but $c(p_t, V(P)^c) \not\subseteq C_{k}(P)\setminus C_A(P)$,
      then there are vertices $p_i\in V(P)$ and $r\in V(P)^c$ such that $c(p_i,p_{i+1})=c(p_t,r)\in C_k(P)$
      and, furthermore, $c(p_1,p_{i+1})\nin C_{k}(P)$.

      Recall that $\rho_i \cdot P$ is the path $(p_i,p_{i-1},\dotsc,p_1,p_{i+1},\dotsc,p_t)$.
      Then $\rho_i \cdot P$ is a $k$-rainbow path with $\abs{C_k(\rho_i\cdot P)}\leq \abs{C_k(P)}$
      and $\abs{V(\rho_i\cdot P)} = \abs{V(P)}$.
      Moreover, $\rho_i\cdot P$ ends in $p_t$ and we have $c(p_t,r)\in C_{k-1}(\rho_i\cdot P)$.
      This means that
      $c(p_t, V(\rho_i\cdot P)^c) \not\subseteq C_{k}(\rho_i\cdot P)$, and
      so can we proceed as in the second case.
  \end{enumerate}
  This completes the proof of the lemma. Note that in addition, we have proved that every maximum
  length $k$-rainbow path is maximal.
\end{proof}

\newcommand{\edges}[1]{\mathcal{E}[#1]}
\begin{lemma}
  \label{lemma:ck}
  If $P$ is a maximal $k$-rainbow path, then
  \[ \abs{C_k(P)} \geq (k+1)n - (k+1)\abs{V(P)}\text.\]
\end{lemma}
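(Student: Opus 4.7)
My plan is to lower-bound $|C_k(P)|$ by exhibiting $(k+1)(n-t)$ distinct colors inside it, all appearing on edges incident to $p_1$, where $t=|V(P)|$. The decomposition follows the structure of the two maximality conditions: (\ref{eq:maximality}) yields $n-t$ colors from edges from $p_1$ into $V(P)^c$, while (\ref{eq:maximality-rot}), combined with the fact that each color of $C_k(P)$ occupies exactly $k$ edges of $P$, yields a further $k(n-t)$ colors from edges from $p_1$ into $V(P)\setminus\{p_1\}$.

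For the first family, I set $F_1=c(p_1,V(P)^c)$. Properness at $p_1$ gives $|F_1|=n-t$, and (\ref{eq:maximality}) gives $F_1\subseteq C_k(P)$. This is immediate.

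For the second family, fix $r\in V(P)^c$ and let $\gamma_r=c(p_t,r)$. By (\ref{eq:maximality-rot}) we have $\gamma_r\in C_k(P)\setminus C_A(P)$, so $\gamma_r$ occurs on exactly $k$ edges of $P$; and by the very definition of $C_A(P)$, for each such edge $\{p_i,p_{i+1}\}$ the color $c(p_1,p_{i+1})$ must lie in $C_k(P)$. The colors $\gamma_r$ are pairwise distinct in $r$ (properness at $p_t$), so the $k$-edge sets collected from different $r$'s are pairwise disjoint. Summing over all $r\in V(P)^c$ produces $k(n-t)$ distinct edges of $P$, hence $k(n-t)$ distinct right-endpoints $p_{i+1}\in V(P)\setminus\{p_1\}$, and finally (properness at $p_1$) $k(n-t)$ distinct colors $c(p_1,p_{i+1})$ in $C_k(P)$; call this set $F_2$.

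The sets $F_1$ and $F_2$ record colors of edges from $p_1$ into the disjoint vertex sets $V(P)^c$ and $V(P)\setminus\{p_1\}$ respectively, so properness at $p_1$ keeps them disjoint, giving $|C_k(P)|\geq |F_1|+|F_2|=(k+1)(n-t)$. The only non-routine step is the second one, and the main thing to get right is that the factor $k$ in the bound comes from each $\gamma_r$ occupying exactly $k$ edges of $P$, all of which are forced to point at distinct ``good'' vertices $p_{i+1}$; the disjointness across different $r$'s is then automatic from properness at $p_t$. As a sanity check, for $k=1$ one has $|C_1(P)|=t-1$, so the lemma reduces to $t-1\geq 2(n-t)$, which is exactly the bound of Theorem~\ref{thm:gm2010}.
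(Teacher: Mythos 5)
Your proof is correct, and it reaches the bound by a genuinely different decomposition than the paper's. The paper works with the ``bad'' set $C_A(P)\cap C_k(P)$: writing $\mathcal{E}[C]$ for the edges of $P$ colored from $C$, it uses condition \eqref{eq:maximality} to show that at least $n-1-\abs{C_k(P)}$ vertices $v$ satisfy $c(p_1,v)\nin C_k(P)$ and hence contribute an edge to $\mathcal{E}[C_A(P)]$, deduces $k\abs{C_A(P)\cap C_k(P)}\geq n-t+(k-1)\abs{C_k(P)}$ by discarding the edges colored outside $C_k(P)$, and then combines this with $k(n-t)\leq k\abs{C_k(P)\setminus C_A(P)}$ from condition \eqref{eq:maximality-rot} to solve for $\abs{C_k(P)}$. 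You instead exhibit $(k+1)(n-t)$ distinct members of $C_k(P)$ directly as colors of edges at $p_1$: the $n-t$ colors $c(p_1,V(P)^c)$ from \eqref{eq:maximality}, plus, for each $r\in V(P)^c$, the $k$ colors $c(p_1,p_{i+1})$ taken over the $k$ edges of $P$ carrying the color $c(p_t,r)$ --- these lie in $C_k(P)$ precisely because $c(p_t,r)\nin C_A(P)$, the relevant edges are pairwise distinct (within one color class because a color class is a matching, across classes by properness at $p_t$), distinct edges of $P$ have distinct higher-indexed endpoints, and properness at $p_1$ keeps all the resulting colors distinct and disjoint from the first family. The ingredients are identical (the two maximality conditions, the definition of $C_A(P)$, properness), but your injection argument avoids the intermediate lower bound on $\abs{C_A(P)\cap C_k(P)}$ entirely and makes the origin of the factor $k+1$ transparent, at the price of having to check the various disjointness claims explicitly; the paper's inequality chain trades that bookkeeping for a slightly more opaque computation. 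Your $k=1$ sanity check is also apt: the lemma then reduces exactly to the Gyárfás--Mhalla bound of Theorem \ref{thm:gm2010}.
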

\begin{proof}
  Let $P=(p_1,\dotsc,p_t)$ be a maximal $k$-rainbow path.
  If $C\subseteq C(E)$ is a set of colors, then we write
  \[ \edges{C} = \{ e\in E(P) : c(e)\in C \} = c^{-1}(C)\cap E(P) \]
  for the set of edges of $P$ colored with a color in $C$.

  First, we would like to find a lower bound for $\abs{C_A(P) \cap C_k(P)}$.
  Since every color appears at most $k$ times on $P$,
  we have
  \[ k\abs{C_A(P)\cap C_k(P)} \geq \abs{\edges{C_A(P)\cap C_k(P)}}\text. \]
  By maximality condition \eqref{eq:maximality},
  every vertex $v$ with $c(p_1,v) \nin C_k(P)$ is in $V(P)$,
  so we have
  \[ \abs{\edges{C_A(P)}}\geq \abs{C_A(P)} \geq n-1-\abs{C_k(P)}\text. \]
  Moreover,
  \[ \abs{\edges{C_k(P)^c}} = \abs{E(P)}-\abs{\edges{C_k(P)}} = t-1-k\abs{C_k(P)}\text.  \]
  Then we get
  \begin{align*} k\abs{C_A(P)\cap C_k(P)} &\geq \abs{\edges{C_A(P)\cap C_k(P)}}\\
    & = \abs{\edges{C_A(P)}\setminus \edges{C_k(P)^c}}\\
    &\geq \abs{\edges{C_A(P)}} - \abs{\edges{C_k(P)^c}}\\
    &\geq n-1-\abs{C_k(P)} - (t-1-k\abs{C_k(P)})\\
    &= n-t +(k-1)\abs{C_k(P)}\text.
  \end{align*}

  By maximality condition \eqref{eq:maximality-rot},
  \begin{align*} k\abs{c(p_t,V(P)^c)} &\leq k\abs{C_k \setminus C_A(P)} \\
    &= k\abs{C_k(P)} - k\abs{C_A(P) \cap C_k(P)} \\
    &\leq k\abs{C_k(P)} - n + t - (k-1)\abs{C_k(P)}\\
    &= \abs{C_k(P)} - n + t
  \end{align*}

  With $\abs{c(p_t,V(P)^c)} = n-t$, we get
  \[ \abs{C_k(P)} \geq (k+1)n - (k+1)t\text, \]
  as claimed.
\end{proof}

\section{A theorem on the length of $k$-rainbow paths}

\begin{thm}
  \label{thm:k-bound}
  In every proper coloring of $K_n$ and for any $k\geq 1$,
  there is a $k$-rainbow path on at least
  \[ \left(1-\frac{2}{(k+2)!}\right)n \]
  vertices.
\end{thm}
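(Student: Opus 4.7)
The plan is to proceed by induction on $k$, using the two lemmas of Section~3.2 as the engine. The inductive hypothesis is that every proper edge coloring of $K_n$ contains a $(k-1)$-rainbow path on at least $(1-2/(k+1)!)n$ vertices. The base case $k=1$ follows immediately from Theorem~\ref{thm:1-rainbow}, since $(2n+1)/3 \geq (1-2/3!)n = 2n/3$.

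For the inductive step, start with a $(k-1)$-rainbow path $P$ of length at least $(1-2/(k+1)!)n$, guaranteed by the inductive hypothesis. Apply the first lemma of Section~3.2 to $P$ to obtain a maximal $k$-rainbow path $P'$ with
\[ \abs{C_k(P')} \leq \abs{V(P')} - \abs{V(P)}\text. \]
The point is that because $P$ already used every color at most $k-1$ times, extending it one vertex at a time into a maximal $k$-rainbow path can promote at most one color to frequency $k$ per added vertex. On the other hand, Lemma~\ref{lemma:ck} gives the lower bound
\[ \abs{C_k(P')} \geq (k+1)\bigl(n-\abs{V(P')}\bigr)\text. \]

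Combining these two inequalities yields
\[ (k+1)\bigl(n-\abs{V(P')}\bigr) \leq \abs{V(P')}-\abs{V(P)}\text, \]
which rearranges to
\[ \abs{V(P')} \geq \frac{(k+1)n + \abs{V(P)}}{k+2}\text. \]
Plugging in $\abs{V(P)} \geq (1-2/(k+1)!)n$ and simplifying, the $n$-coefficient becomes
\[ \frac{(k+1) + 1 - 2/(k+1)!}{k+2} = 1 - \frac{2}{(k+2)(k+1)!} = 1 - \frac{2}{(k+2)!}\text, \]
which is exactly the bound claimed by the theorem. Hence $P'$ is a $k$-rainbow path of the required length, completing the induction.

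The proof is essentially a direct computation once the two lemmas from Section~3.2 are in hand, so there is no real obstacle beyond verifying the base case and checking that the algebraic combination yields the factorial improvement. The conceptual heart is already encoded in Lemma~\ref{lemma:ck}: maximality forces a surprisingly large number of colors to appear exactly $k$ times on the path, and this makes each inductive step amplify the defect $n - \abs{V(P)}$ by a factor of $k+2$.
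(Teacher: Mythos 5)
Your proposal is correct and follows exactly the paper's own argument: induction on $k$ with the base case from Theorem \ref{thm:1-rainbow}, then combining the upper bound on $\abs{C_k(P')}$ from the first lemma with the lower bound from Lemma \ref{lemma:ck} to get $(k+2)\abs{V(P')} \geq (k+1)n + \abs{V(P)}$, and the same algebraic simplification to $1-2/(k+2)!$. There is nothing to add or correct.
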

\begin{proof}
  The proof goes by induction on $k$.

  The induction basis is provided for by Theorem \ref{thm:1-rainbow},
  since \[ \frac{2n+1}{3} \geq \left(1-\frac{2}{3!}\right)n\text.\]

  In the induction step, assume that there is a
  $(k-1)$-rainbow path using
  \[t_{k-1} > \left(1-\frac{2}{(k+1)!}\right)n\] vertices.
  Then, by Lemma 1, there is a maximal $k$-rainbow path $P$ with $\abs{C_k(P)}\leq \abs{V(P)}-t_{k-1}$.
  Using Lemma 2, we get
  \[ \abs{C_k(P)} \geq (k+1)n - (k+1)\abs{V(P)}\text,\]
  so if we write $t_k$ for $\abs{V(P)}$, then
  \[ t_k-t_{k-1} \geq (k+1)n - (k+1)t_k\text,\]
  or
  \[ (k+2)t_k \geq t_{k-1} + (k+1)n\text.\]

  Using the induction hypothesis, we get
  \begin{align*}
    t_k& \geq \frac{t_{k-1} + (k+1)n}{k+2}\\
    &> \frac{\left(1-\frac{2}{(k+1)!}\right)n}{k+2} + \frac{(k+1)n}{k+2}\\
    &= \frac{n}{k+2} + \frac{(k+1)n}{k+2} - \frac{2n}{(k+2)!}\\
    &= \frac{(k+2)n}{k+2} - \frac{2n}{(k+2)!}\\
    &= \left(1 - \frac{2}{(k+2)!}\right)n\text,
  \end{align*}
  so $P$ is a $k$-rainbow path of sufficient length.
\end{proof}

\section{Conclusion}

From the statement of Theorem \ref{thm:k-bound}, it is easily seen that for fixed $n$,
the number of vertices not included in a maximum $k$-rainbow path is in the order of $1/k!$.
This is clearly an improvement over Proposition \ref{prop:naive}, which only
shows that the number
of vertices not included in a maximum $k$-rainbow path is in the order of $1/2^k$.
The growth provided by Theorem \ref{thm:k-bound} is asymptotically faster.

In the proof, we essentially used a generalized version of the argument
made by Gyárfás and Mhalla in~\cite{GyarfasMhalla2010}, a discussion of which can be found in
the previous chapter.
In this light, it might be interesting to generalize the techniques used in the proof of
the bound of $(3/4-o(1))n$ for the length of maximum rainbow paths in $K_n$, and to
apply them to $k$-rainbow paths.

\chapter{Conclusion}

In the preceding chapters, we have derived two novel results on the existence of certain paths
in proper edge colorings of the complete graph $K_n$.

Most importantly, we have shown that in every proper edge coloring of $K_n$ there is a
rainbow path of length at least \[\left(\frac{3}{4}-o(1)\right)n\text.\]
This result improves on the previously best known bound of $2n/3$ proved by
Gyárfás and Mhalla in~\cite{GyarfasMhalla2010}.
As a corollary, there are Hamiltonian cycles using at least $(3/4-o(1))n$ colors;
here we improve on a result by Akbari, Etesami, Mahini and Mahmoody~\cite{Akbari2007}.

Moreover, we have proved that in every proper edge coloring of $K_n$,
there is a $k$-rainbow path on at least \[ \left(1-\frac{2}{(k+2)!}\right)n \]
vertices, for any $k>0$.
Thus, for fixed $n$, the number of vertices not included in a maximum $k$-rainbow
path decreases with $k$
faster than any exponential function, and hence asymptotically faster than what
we got using a naive approach.

We believe that the techniques used in the proof
of the first result, which relied heavily on pigeonhole-style arguments,
do not immediately lend themselves to proving stronger bounds.
Moreover, the proof itself does not seem to reveal deep insights into
the structure of rainbow paths, leading us to believe that different
methods will have to be used to prove the existence of rainbow paths of length $n-o(n)$.

However, it seems likely that applying the same methods, suitably generalized,
to the problem of $k$-rainbow paths in $K_n$ might prove to be fruitful. Indeed, we would
expect the resulting bound to be asymptotically stronger (in $k$, for fixed $n$)
than our bound, whose proof relied on comparatively simple techniques.

\backmatter

\bibliographystyle{alpha}
\bibliography{refs}

\end{document}